\title[Interpreting Constrained MARL]{Interpreting Primal-Dual Algorithms for Constrained Multiagent Reinforcement Learning}
\DeclareMathOperator*{\G}{\scaleobj{0.85}{\scalerel*{\Gamma}{\sum}}}
\newcommand{\N}{\mathbb{N}}
\newcommand{\R}{\mathbb{R}}
\renewcommand{\S}{\mathcal{S}}
\newcommand{\n}{\mathcal{N}}
\newcommand{\A}{\mathcal{A}}
\newcommand{\X}{\mathcal{X}}
\newcommand{\U}{\mathcal{U}}
\newcommand{\E}{\mathbb{E}}
\newcommand{\M}{\mathcal{M}}
\newcommand{\D}{\mathcal{D}}
\newtheorem{assumption}[theorem]{Assumption}
\newcommand{\newversion}[2]{\ifthenelse{\boolean{versiontwo}}{#1}{#2}}
\newcommand\m[1]{\begin{bmatrix}#1\end{bmatrix}} 
\newcommand{\includeappendix}[2]{\ifthenelse{\boolean{fullversion}}{#1}{#2}}
\author{%
 \Name{Daniel Tabas} \Email{dtabas@uw.edu}\\
 \addr University of Washington Department of Electrical Engineering, Seattle, WA, USA%
 \AND
 \Name{Ahmed S. Zamzam} \Email{ahmed.zamzam@nrel.gov}\\
 \addr National Renewable Energy Laboratory, Golden, CO, USA%
 \AND
 \Name{Baosen Zhang} \Email{zhangbao@uw.edu}\\
 \addr University of Washington Department of Electrical Engineering, Seattle, WA, USA
}
\begin{document}

\maketitle

\begin{abstract}%
Constrained multiagent reinforcement learning (C-MARL) is gaining importance as MARL algorithms find new applications in real-world systems ranging from energy systems to drone swarms. Most C-MARL algorithms use a primal-dual approach to enforce constraints through a penalty function added to the reward. In this paper, we study the structural effects of this penalty term on the MARL problem. First, we show that the standard practice of using the constraint function as the penalty leads to a weak notion of safety. However, by making simple modifications to the penalty term, we can enforce meaningful probabilistic (chance and conditional value at risk) constraints. Second, we quantify the effect of the penalty term on the value function, uncovering an improved value estimation procedure. We use these insights to propose a constrained multiagent advantage actor critic (C-MAA2C) algorithm. Simulations in a simple constrained multiagent environment affirm that our reinterpretation of the primal-dual method in terms of probabilistic constraints is effective, and that our proposed value estimate accelerates convergence to a safe joint policy.
\end{abstract}

\begin{keywords}%
  Multiagent reinforcement learning, primal-dual methods, chance constraints, conditional value at risk
\end{keywords}

\section{Introduction}
\newversion{As reinforcement learning (RL) algorithms progress from virtual to cyber-physical applications, it will be necessary to address the challenges of safety, especially when systems are controlled by multiple agents. 
Examples of multiagent safety-critical systems include power grids \citep{Cui2022}, building energy management (BEM) systems \citep{Biagioni2022}, autonomous vehicle navigation \citep{Zhou2022a}, and drone swarms \citep{Chen2020}. In each of these applications, agents must learn to operate in a complicated environment while satisfying various local and system-wide constraints. Such constraints, derived from domain-specific knowledge, are designed to prevent damage to equipment, humans, or infrastructure or to preclude failure to complete some task or objective. 

Constrained multiagent reinforcement learning (C-MARL) poses challenges beyond the single-agent constrained reinforcement learning (C-RL) problem because the interactions between agents can influence both the satisfaction of constraints and the convergence of policies. The potential scale of C-MARL problems eliminates the possibility of directly using common model-based methods for C-RL, such as in \cite{Chen2018c,Ma2021a,Tabas2022}. The main  strategy for tackling C-MARL problems found in the literature is the Lagrangian or primal-dual method (see, e.g.~\cite{Lu2021a,Li2020,Lee2018,Parnika2021} and the references therein). Our aim is to understand some potential drawbacks of this approach and some ways these drawbacks can be mitigated. 

In the primal-dual approach to C-MARL, each agent receives a reward signal that is augmented with a penalty term designed to incentivize constraint satisfaction. The magnitude of the penalty term is tuned to steer policies away from constraint violations while not unnecessarily overshadowing the original reward. Although this approach has been shown to converge to a safe joint policy under certain assumptions
\citep{Lu2021a}, it changes the structure of the problem in a way that is not well understood, leading to two challenges.

The first challenge is that the primal-dual algorithm only enforces \emph{discounted sum constraints} derived from the original safety constraints of the system. As we will show,
discounted sum constraints guarantee safety only in expectation, which is difficult to interpret. 
We propose simple modifications to the penalty term that enable the enforcement of more interpretable constraints, namely, chance constraints \citep{Mesbah2016} and conditional value at risk constraints \citep{Rockafellar2000}, providing bounds on  the probability and the severity of future constraint violations.   
There have been several C-RL algorithms that work with risk sensitivities \citep{Garcia2015a,Chow2018}, but the multiagent context is less studied, and our contributions provide a novel understanding of the safety guarantees provided by C-MARL algorithms.

\begin{wrapfigure}{r}{0.4\textwidth}
  \vspace{-.5cm}
  \begin{center}
    \includegraphics[width=0.38\textwidth]{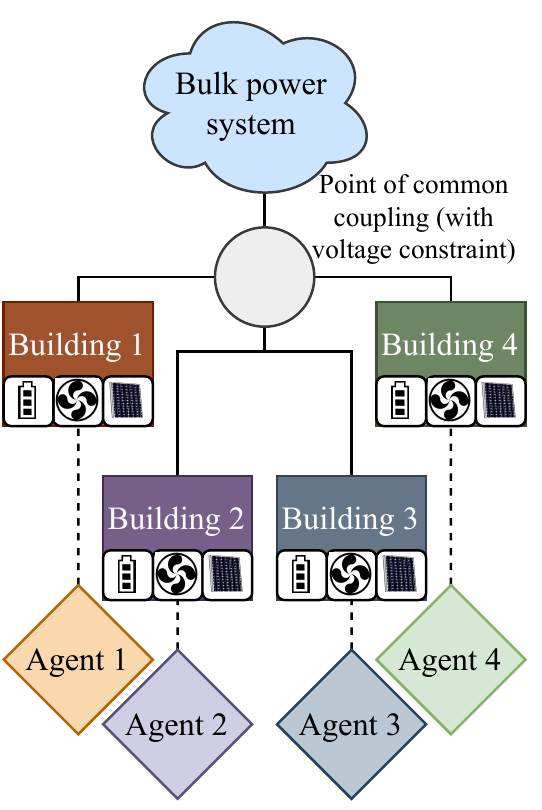}
  \end{center}
  \vspace{-0.5cm}
  \caption{BEM with a voltage constraint at the point of common coupling.
  \vspace{-0.2cm}}
  \label{fig:BEM}
\end{wrapfigure}
The second challenge 
is the fact that the reward is constantly changing as the dual variables are updated, which diminishes the accuracy of value estimates. We quantify this loss of accuracy, and we propose a new value estimation procedure to overcome it. 
Our proposal builds on results in \cite{Tessler2019} showing the affine relationship between the value function and the dual variables. 
We develop a novel class of temporal difference algorithms for value function estimation that directly exploits this observation, giving rise to a value estimate that maintains an accurate derivative with respect to the dual variables. Compared to existing algorithms, our estimates are much more robust to dual variable updates.

The specific C-MARL formulation we study in this paper is inspired by the BEM problem \citep{Molina-Solana2017, Biagioni2022}, illustrated in Figure \ref{fig:BEM}.
The main objective of BEM is to control a building's resources to minimize the cost of energy consumption while maintaining comfort and convenience for the occupants.
However, when BEMs are deployed in multiple buildings, it is critical to ensure that the power network connecting them is safely operated because the uncoordinated control of buildings can cause network-level voltage or power flow violations.
This mandates a level of coordination among agents in the learning stage; thus, we adopt the commonly-studied centralized training/decentralized execution (CTDE) framework \citep{Lowe2017,Foerster2018}, in which a simulator or coordinator provides global state information, constraint evaluations, and Lagrange multipliers (dual variables) to each agent during training. During the testing (execution) phase, we assume that there is no real-time communication between the agents. This stems from the need for privacy and the lack of communication infrastructure in practical systems~\footnote{Even in buildings with advanced metering infrastructure or smart meters, they typically only exchange information with the utility a few times a day.}.

The rest of the paper is organized as follows. In Section \ref{sec:formulation}, we formulate the problem under consideration. In Section \ref{sec:occupation}, we provide an overview of our main interpretive tool, the occupation measure \citep{Borkar1996}. In Section \ref{sec:DRM}, we use the occupation measure to reformulate discounted sum constraints as probabilistic constraints. In Section \ref{sec:value}, we study the value structure of the primal-dual problem and use the results to propose a new value estimation algorithm. In Section \ref{sec:simulations}, we provide some simulation results affirming the contribution of the theoretical observations. 

\subsection{Notation}

The natural numbers and the real numbers are denoted $ \N$ and $\R$, respectively. Given a measurable set $\S$, the set of all possible probability densities over $\S$ is denoted as $\Delta_{\S}$. For any discount factor $\gamma \in (0,1)$ and any sequence $\{y_t\}_{t=0}^T$, the discounted sum operator is $\G_{t=0}^{T}[y_t \mid \gamma] = (1-\gamma) \sum_{t=0}^T \gamma^t y_t$, and $\G_{t=0}^\infty [y_t \mid \gamma] = \lim_{T \rightarrow \infty} \G_{t=0}^T [y_t \mid \gamma]$ if the limit exists. We often drop the second argument $\gamma$ for brevity. The positive component operator is $[y]_+ = \max\{y,0\}$, and the logical indicator function $I[\cdot]$ maps $\{\text{True},\text{False}\}$ to $\{1,0\}$. }{As reinforcement learning (RL) algorithms progress from virtual to cyber-physical applications, it will be necessary to address the challenges of safety and multiple interacting learners. Examples of multi-agent safety-critical systems include power grids \cite{Cui2022}, building energy management systems \cite{Biagioni2022}, autonomous vehicle navigation \cite{Zhou2022a}, and drone swarms \cite{Chen2020}. In each of these applications, agents must learn to operate in a complicated environment, while satisfying various local and system-wide constraints. Such constraints, derived from domain-specific knowledge, are designed to prevent damage to equipment, humans, or infrastructure, or to preclude failure to complete some task or objective. 

One widely-used strategy for solving constrained multi-agent reinforcement learning (C-MARL) problems is the Lagrangian or primal-dual approach \cite{Lu2021a,Li2020,Lee2018,Parnika2021}. In this paradigm, each agent receives a reward signal that is augmented with a penalty term designed to incentivize constraint satisfaction. The magnitude of the penalty term is tuned to incentivize constraint satisfaction while keeping the penalty term from unnecessarily overshadowing the original reward. Although this approach has been shown to converge to a safe joint policy under certain assumptions \cite{Lu2021a}, the constraint that ends up getting enforced is not the same as the one originally expressed. Rather, the Lagrangian approach enforces a bound on the expected value of the discounted sum of constraint evaluations accumulated over a (possibly infinite-horizon) trajectory, known as a \textit{discounted sum constraint} (DSC). 

At any given time step, the safety implications afforded by satisfying a DSC are not immediately obvious. The focus of this paper is to ask the question: \textbf{what is the practical meaning of the algorithmically-convenient discounted sum constraint?} To answer this question, we show that with only slight modifications to the penalty term, \textbf{discounted sum constraints can be reinterpreted as chance constraints or conditional value-at-risk (CVaR) constraints.} The probability density underlying these risk metrics is the \textit{occupation measure} \cite{Borkar1996} which describes the likelihood of visiting a given region of the state space with more weight placed on states that are likely to be visited \textit{earlier} in a trajectory. Although several single-agent RL algorithms deal directly with risk sensitivity \cite{Garcia2015a,Chow2018}, the multi-agent context is less well-studied, and our contribution is to attach new meaning to the safety guarantees provided by existing C-MARL algorithms.

\begin{wrapfigure}{r}{0.4\textwidth}
  \begin{center}
    \includegraphics[width=0.38\textwidth]{figures/BEM.pdf}
  \end{center}
  \caption{Building energy management with a voltage constraint at the point of common coupling. Key: PV = photovoltaic array, BESS = battery energy storage system, SWH = smart water heater, SCC = smart climate control system.}
  \label{fig:BEM}
\end{wrapfigure}

The MARL literature includes a wide array of problem formulations and solution techniques. In this paper, we study a specific yet fairly general formulation inspired by the problem of building energy management (BEM) \cite{Biagioni2022}, illustrated in Figure \ref{fig:BEM}. BEM through RL is gaining attention due to the growth of controllable devices such as batteries and smart climate control systems at the grid edge \cite{Molina-Solana2017}. The main objective of BEM is to control a building's resources to minimize the cost of energy consumption, while affording a degree of comfort and convenience to the occupants. Since the thermal dynamics of buildings can be extremely complicated, data-driven techniques pose a promising alternative to traditional model-based control. However, when RL algorithms are deployed on multiple buildings in a locality, the interactions between buildings (through the power system) can be unpredictable, costly, and even dangerous. In this paper, we are particularly interested in whether system-wide constraints will be satisfied when each building is operated by a different agent. One possible constraint is a bound on the voltage at the common node where buildings in a locality connect to the bulk power system. Another example is a constraint on the amount of power or energy extracted from a municipal battery energy storage system. Each of these constraints depends on the real-time power consumption of each building. 

In this paper, we adopt the commonly-studied centralized training/decentralized execution (CTDE) framework \cite{Lowe2017,Foerster2018}. In many MARL situations, global state information must be available to each agent in order to find locally optimal policies, but is not completely necessary for executing the learned policies. The CTDE framework applies to BEM problems in which privacy requirements eliminate the possibility of coordination outside of training time. During training, however, we assume that a simulator or coordinator can provide global state information, constraint evaluations, and Lagrange multipliers (dual variables) to each agent. Although our motivation comes from BEM, we stress that the forthcoming analysis applies to any system that fits into the CTDE framework.

The bulk of the paper is divided into two parts. The first part is devoted to developing a rigorous understanding of our main interpretive tool, the occupation measure, in the context of C-MARL. We show that the occupation measure lies ``between'' the initial and steady-state state visitation probability densities, thus characterizing the near-term behavior of the system under a given joint policy. We then attach a precise meaning to the phrase ``near-term'' using a concept known as the \textit{effective horizon}. The second part of the paper uses the occupation measure to reinterpret DSCs as bounds on \textit{discounted risk metrics}. We show that by making simple modifications to the penalty term, a DSC can be used to control either the probability of incurring a constraint violation (a chance constraint \cite{Mesbah2016}) or the average severity of the worst $(1-\beta)$ fraction of possible constraint evaluations (a CVaR constraint \cite{Rockafellar2000}), in the near term. The analysis is quite general and applies equally to continuous and discrete state and action spaces. We conclude with some simulations affirming the theoretical results and demonstrating the effectiveness of the proposed modifications to the penalty function. 






\subsection{Notation}

The natural numbers $1,2,\ldots$ are denoted $\N$ and the nonnegative reals are denoted $\R_+$. Given a set $\S$, the set of probability densities over $\S$ is denoted $\Delta_{\S}$. For any discount factor $\gamma \in (0,1)$ and any sequence $\{y_t\}_{t=0}^T$, the discounted sum operator is $\G_{t=0}^{T}[y_t \mid \gamma] = (1-\gamma) \sum_{t=0}^T \gamma^t y_t$, and $\G_{t=0}^\infty [y_t \mid \gamma] = \lim_{T \rightarrow \infty} \G_{t=0}^T [y_t \mid \gamma]$ if the limit exists. We will often drop the second argument $\gamma$ for brevity. The positive component operator is $[y]_+ = \max\{y,0\}$ and the logical indicator function $I[\cdot]$ maps $\{\text{True},\text{False}\}$ to $\{1,0\}$. }

\section{Problem formulation} \label{sec:formulation}
\subsection{Constrained MARL}

We consider a noncooperative setting in which $n$ agents pursue individual objectives while subject to global constraints (e.g., a shared resource constraint). We assume there is no real-time communication, and that each agent's action is based only on its local observations. However, policy updates can use global information under the CTDE framework \citep{Lowe2017,Foerster2018}. In this paper, we consider the case of continuous state and action spaces. 

The setting is described by the tuple $(\{\X_i\}_{i \in \n}, \{\U_i\}_{i \in \n}, \{R_i\}_{i \in \n}, f, C, p_0, \gamma)$, where $\n$ is the index set of agents, $\X_i \subset \R^{n_x^i}$ and $\U_i \subset \R^{n_u^i}$ are the state and action spaces of agent $i$, 
and $R_i: \X_i \times \U_i \rightarrow \R$ is the reward function of agent $i$. We assume that the sets $\X_i$ and $\U_i$ are compact for all $i$.  Let $\X = \prod_{i \in \n} \X_i$ and $\U = \prod_{i \in \n} \U_i$ be the joint state and action spaces of the system, respectively. Then $f: \X \times \U \rightarrow \Delta_{\X}$ describes the state transition probabilities, i.e., $f(\cdot \mid x,u)$ is a probability density function. The function $C: \X \rightarrow \R^m$ is used to describe a set of safe states, $\S = \{x \in \X \mid C(x) \leq 0\}.$  

Let $p_0 \in \Delta_{\X}$ denote the initial state probability density and $\gamma \in (0,1)$ be a discount factor. At time $t$, the state, action, and reward of agent $i$ are $x^i_t$, $u^i_t,$ and $r^i_t$, respectively, and constraint $j$ evaluates to $c_t^j = C^j(x_t)$. Using a quantity without a superscript to represent a stacked vector ranging over all $i \in \n$ or all $j \in \{1,\ldots,m\}$, a system trajectory is denoted $\tau = \{(x_t, u_t, r_t, c_t)\}_{t=0}^\infty$.

In the noncooperative C-MARL framework, each agent seeks to learn a policy $\pi_i: \X_i \rightarrow \Delta_{\U_i}$ that maximizes the expected discounted accumulation of individual rewards. We let $\pi: \X \rightarrow \Delta_{\U}$ denote the joint policy, and $f^\pi: \X \rightarrow \Delta_{\X}$ is the state transition probability induced by a joint policy $\pi$.
The tuple $(p_0,f,\pi)$ induces a state visitation probability density at each time step, $p_t^\pi(x) = \int_{\X^t} p_0(x_0) \cdot \prod_{k=1}^t f^\pi(x_k \mid x_{k-1}) \ dx_0\ \ldots\ dx_{k-1}$, and we say $p_\infty^\pi(x) = \lim_{t \rightarrow \infty}p_t^\pi(x)$ for each $x \in \X$ if the limit exists. The collection of visitation probabilities $\{p_t^\pi\}_{t=0}^\infty$ gives rise to a probability density of trajectories $\tau$, denoted $\M \in \Delta_{\prod_{t=0}^\infty(\X \times \A \times \R^n \times \R^m)}$; thus, the objective of each agent can be stated precisely as maximizing $\E_{\tau \sim \M}[\G_{t=0}^\infty r_t^i]$. 

The agents, however, must settle on a joint policy that keeps the system in the safe set $\S$. Due to the stochastic nature of the system, satisfying this constraint at all times is too difficult and in some cases too conservative. A common relaxation procedure is to formulate an augmented reward $\tilde{r}_t^i = r_t^i - \lambda^Tc_t$ where $\lambda \in \R^m_+$, the \emph{Lagrange multiplier} or \emph{dual variable}, is adjusted to incentivize constraint satisfaction. This leads to the primal-dual algorithm for C-MARL, discussed in the next section. The following mild assumption facilitates the analysis. 

\begin{assumption} \label{assumption:bounded} $R^i$, $C^j$, and $p_t^\pi$ are bounded on $\X$ for all $i \in \n$, all $j \in \{1,\ldots,m\}$, and all $t \in \N$. 
\end{assumption}

The boundedness of $R^i$ and $C^j$ is a common assumption \citep{Lu2021a,Tessler2019,Paternain2019} that we will use to exchange the order of limits, sums, and integrals using the dominated convergence theorem. The assumption of bounded $p_t^\pi$ is not strictly necessary and does not change the results; however, we use it throughout the paper to simplify calculations.

\subsection{Primal-dual algorithms}

The augmented reward function leads to the following min-max optimization problem for agent $i$: \begin{align}
    &\min_{\lambda \geq 0} \max_{\pi_i} \E_{\tau \sim \M}\big[ \G_{t=0}^\infty[r_t^i - \lambda^T c_t]\big] \\
    =&\min_{\lambda \geq 0} \max_{\pi_i} \bigg(\E_{\tau \sim \M}\big[ \G_{t=0}^\infty[r_t^i]\big] - \lambda^T \E_{\tau \sim \M}\big[\G_{t=0}^\infty[c_t]\big]\bigg) \label{eqn:11-9-22-1}
\end{align} where \eqref{eqn:11-9-22-1} uses absolute convergence (stemming from Assumption \ref{assumption:bounded}) to rearrange the terms of the infinite sum. Note that the minimization over $\lambda$ is coupled across agents. Any fixed point of \eqref{eqn:11-9-22-1} will satisfy $\E_{\tau \sim \M}[\G_{t=0}^\infty c_t] \leq 0$ because 
if $\E_{\tau \sim \M}[\G_{t=0}^\infty c_t^j] \neq 0$, then the objective value can be reduced by increasing or decreasing $\lambda_j$, unless $\E_{\tau \sim \M}[\G_{t=0}^\infty c_t^j] < 0$ and $\lambda_j = 0$. In other words, the primal-dual method enforces a \emph{discounted sum constraint} derived from the safe set $\S$. 
Although discounted sum constraints are convenient, it is not obvious what they imply about safety guarantees with respect to the original constraints.
We begin our investigation of discounted sum constraints by taking a closer look at a state visitation probability density known as the occupation measure. 

\section{Occupation measure} \label{sec:occupation}
The \textit{occupation measure} describes the average behavior of a Markov process in some sense which will be made precise shortly. As we will show, the occupation measure is instrumental in clarifying the role of discounted sum constraints. In this paper, we use a definition common for continuous-state, infinite-horizon discounted MDPs \citep{Paternain2019,Silver2014}. 

\begin{definition}[Occupation measure] \label{def:11-30-22-1}
The occupation measure $\mu^\pi_\gamma \in \Delta_{\X}$ associated with discount factor $\gamma$, induced by a joint policy $\pi$, is defined for any $x \in \X$ as $\mu^\pi_\gamma(x) = \G_{t=0}^\infty p_t^\pi(x).$
\end{definition}

In this section, we provide some interpretations for the occupation measure before using it to ascribe meaning to discounted sum constraints.
The first question one might ask is whether $\mu^\pi_\gamma$ is itself a pdf. It is, of course, nonnegative, and the following proposition shows it integrates to unity under mild conditions. 

\begin{proposition} \label{thm:10-17-22-1}
Under Assumption \ref{assumption:bounded}, $\int_{\X} \mu^\pi_\gamma(x) dx = 1$. 
\end{proposition}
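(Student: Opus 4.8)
The plan is to unfold the definition of the occupation measure and reduce the claim to the elementary fact that each time-indexed visitation density $p_t^\pi$ integrates to unity. Writing out the discounted sum operator, $\mu^\pi_\gamma(x) = (1-\gamma)\sum_{t=0}^\infty \gamma^t p_t^\pi(x)$, so that $\int_{\X} \mu^\pi_\gamma(x)\,dx = \int_{\X} (1-\gamma)\sum_{t=0}^\infty \gamma^t p_t^\pi(x)\,dx$. If the summation and integration can be exchanged, the result follows immediately: each $p_t^\pi$ is a probability density, hence $\int_\X p_t^\pi(x)\,dx = 1$, and the remaining geometric series evaluates to $(1-\gamma)\sum_{t=0}^\infty \gamma^t = 1$.

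First I would record that $\int_\X p_t^\pi(x)\,dx = 1$ for every $t$. This is a consequence of building $p_t^\pi$ from $p_0 \in \Delta_\X$ and the induced transition kernel $f^\pi(\cdot \mid \cdot)$, each of which integrates to one over $\X$. An induction on $t$ formalizes this: assuming $\int_\X p_{t-1}^\pi(x')\,dx' = 1$, one writes $p_t^\pi(x) = \int_\X f^\pi(x \mid x')\,p_{t-1}^\pi(x')\,dx'$, applies Fubini's theorem to swap the order of integration, and uses $\int_\X f^\pi(x \mid x')\,dx = 1$ together with the inductive hypothesis. This step may alternatively be taken as given from the definition of a state visitation density.

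The main obstacle — and the only nontrivial step — is justifying the interchange of the infinite summation (a limit of partial sums) with the integral. I would handle this with the dominated convergence theorem, invoking Assumption \ref{assumption:bounded}. Let $S_T(x) = (1-\gamma)\sum_{t=0}^T \gamma^t p_t^\pi(x)$ denote the partial sums, which converge pointwise to $\mu^\pi_\gamma(x)$ as $T \to \infty$. Since each $p_t^\pi$ is bounded on $\X$ by a common constant $B$ and $\X$ is compact (hence of finite Lebesgue measure), the partial sums are uniformly bounded: $0 \le S_T(x) \le (1-\gamma) B \sum_{t=0}^\infty \gamma^t = B$. The constant function $B$ is integrable over the finite-measure set $\X$ and dominates the sequence $\{S_T\}$, so dominated convergence permits exchanging the limit and the integral. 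Alternatively, since every term is nonnegative, the monotone convergence theorem applies directly to the increasing partial sums and yields the same conclusion without appeal to boundedness. With the interchange justified, substituting $\int_\X p_t^\pi\,dx = 1$ and summing the geometric series completes the argument.
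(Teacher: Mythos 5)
Your proof is correct and follows essentially the same route as the paper's: unfold the definition of $\mu^\pi_\gamma$, use the Dominated Convergence Theorem (via Assumption \ref{assumption:bounded}) to exchange the sum and integral, note that each $p_t^\pi$ integrates to unity, and evaluate the geometric series. Your additional details (the induction showing $\int_\X p_t^\pi\,dx = 1$ and the observation that monotone convergence would also suffice) are sound elaborations of steps the paper leaves implicit.
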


The proof for Proposition \ref{thm:10-17-22-1} is in Appendix \includeappendix{\ref{appendix:thm}}{A}\footnote{The full paper with appendix is available at \url{https://arxiv.org/abs/2211.16069}.}. What does $\mu^\pi_\gamma$ tell us about the behavior of a system under a given policy? It describes the probability of visiting a certain state but with more weight placed on states that are likely to be visited \textit{earlier} in time. In fact, $\mu^\pi_\gamma$ describes the near-term behavior in the following sense.

\begin{proposition} \label{thm:10-17-22-2} Under Assumption \ref{assumption:bounded}, for any $x \in \X$, the following statements hold:
\begin{enumerate}
    \item $\lim_{\gamma \rightarrow 0^+} \mu^\pi_\gamma(x) = p_0(x).$
    \item $\lim_{\gamma \rightarrow 1^-} \mu^\pi_\gamma(x) = \lim_{t \rightarrow \infty} p_t^\pi$ if the latter limit exists.
\end{enumerate}
\end{proposition}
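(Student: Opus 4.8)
The plan is to unfold the definition $\mu^\pi_\gamma(x) = \G_{t=0}^\infty p_t^\pi(x) = (1-\gamma)\sum_{t=0}^\infty \gamma^t p_t^\pi(x)$ and treat each statement as a one-dimensional limit of an Abel-type weighted average, holding $x$ fixed. Two facts drive everything: the normalization $(1-\gamma)\sum_{t=0}^\infty \gamma^t = 1$, so that the weights $(1-\gamma)\gamma^t$ form a probability distribution over $t$; and the uniform bound $p_t^\pi(x) \le M$ from Assumption \ref{assumption:bounded}, which guarantees absolute convergence of the series for each $\gamma \in (0,1)$ and licenses splitting the sum and bounding tails. I also use $p_0^\pi = p_0$, immediate from the definition of $p_t^\pi$ at $t=0$.

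For statement 1, as $\gamma \to 0^+$ I would isolate the $t=0$ term, writing $\mu^\pi_\gamma(x) = (1-\gamma)p_0(x) + (1-\gamma)\sum_{t=1}^\infty \gamma^t p_t^\pi(x)$. The first term tends to $p_0(x)$ since $1-\gamma \to 1$. The remaining tail is nonnegative and bounded above by $M(1-\gamma)\sum_{t=1}^\infty \gamma^t = M\gamma$, which vanishes as $\gamma \to 0^+$. This step is routine.

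Statement 2 is the crux and is an instance of an Abelian (Abel summation) theorem. Writing $L = \lim_{t\to\infty} p_t^\pi(x)$ and using the normalization to express $L = (1-\gamma)\sum_{t=0}^\infty \gamma^t L$, I would form $\mu^\pi_\gamma(x) - L = (1-\gamma)\sum_{t=0}^\infty \gamma^t\big(p_t^\pi(x) - L\big)$. Fix $\epsilon > 0$ and choose $N$, independent of $\gamma$, so that $|p_t^\pi(x) - L| < \epsilon$ for all $t \ge N$. Splitting the sum at $N$, the tail is bounded by $\epsilon(1-\gamma)\sum_{t\ge N}\gamma^t \le \epsilon$, while the head $(1-\gamma)\sum_{t<N}\gamma^t|p_t^\pi(x)-L|$ is $(1-\gamma)$ times a fixed finite quantity and hence tends to $0$ as $\gamma \to 1^-$. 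Therefore $\limsup_{\gamma\to 1^-}|\mu^\pi_\gamma(x) - L| \le \epsilon$, and letting $\epsilon \to 0$ gives the claim.

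The main obstacle is the interchange implicit in statement 2: the head term must be controlled in a way that does not blow up, which works precisely because the factor $(1-\gamma)$ annihilates the finite partial sum once $N$ is frozen as a function of $\epsilon$ alone. The delicate point is the order of quantifiers — $N$ is chosen before $\gamma$ is sent to $1$ — which is the standard device that makes Abelian limits go through, and I would be careful to invoke boundedness only to justify convergence of the series, not to pin down the limiting value itself.
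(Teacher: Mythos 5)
Your proof is correct and follows essentially the same route as the paper's: the same term-isolation/tail bound for the $\gamma \to 0^+$ limit (the paper packages this as an appeal to Tannery's theorem, yours is just the explicit version), and the same Abelian split-at-$N$ argument for the $\gamma \to 1^-$ limit, with the quantifiers in the right order. No gaps.
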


The proof for Proposition \ref{thm:10-17-22-2} is in Appendix \includeappendix{\ref{appendix:thm}}{A}. Figure \ref{fig:DSD} provides an illustration of the result in Proposition \ref{thm:10-17-22-2} when $p_t^\pi$ evolves as a normal distribution with mean $0.95^t$ and constant variance. The point at which $\mu^\pi_\gamma$ equally resembles $p_0$ and $p_\infty^\pi$ is exactly at $\gamma = 0.95$. 


\begin{wrapfigure}{r}{0.4\textwidth}
  \vspace{-.5cm}
  \begin{center}
    \includegraphics[width=0.38\textwidth]{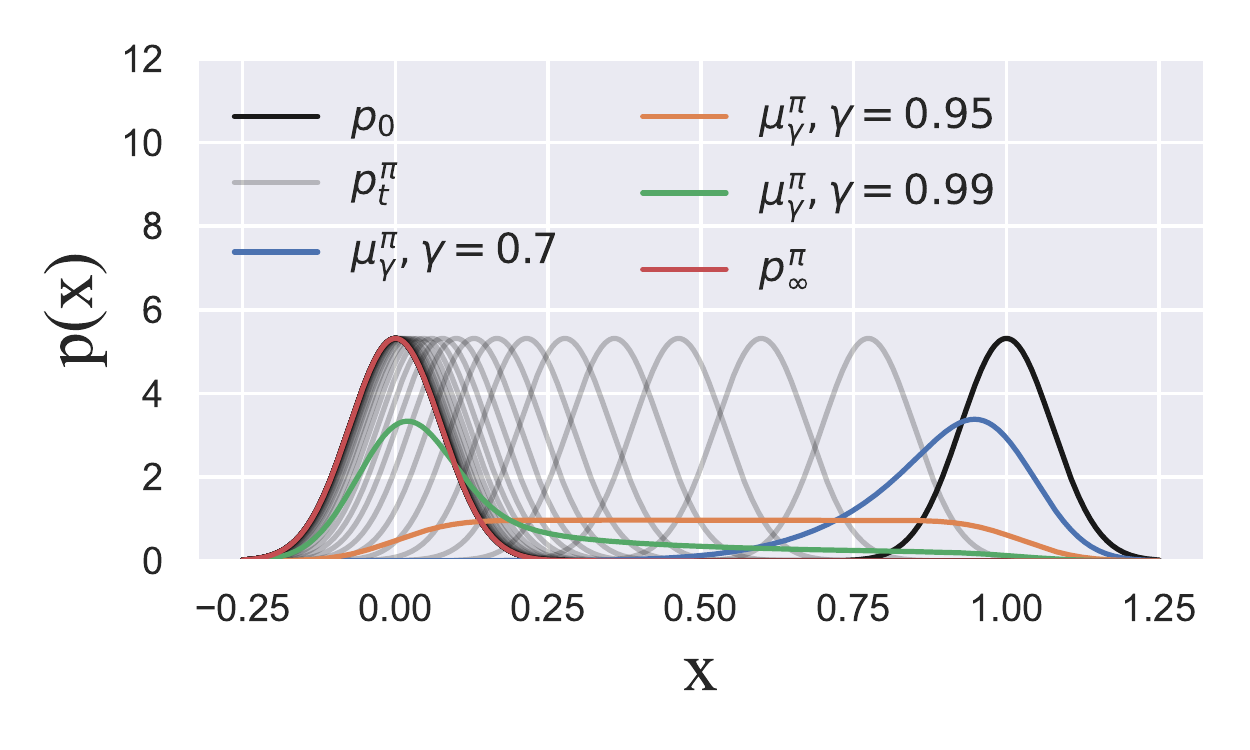}
  \end{center}
  \vspace{-.5cm}
  \caption{Example of the occupation measure for various levels of $\gamma$.}
  \vspace{-.2cm}
  \label{fig:DSD}
\end{wrapfigure}

According to Proposition \ref{thm:10-17-22-2}, the occupation measure describes a state distribution that lies between the initial and long-term behavior of the system. But where exactly does it lie in between these two extremes?
The \textit{effective horizon} of a discounted planning problem is often set to $T_1(\gamma) = \frac{1}{1-\gamma}$, which is the expected termination time if the probability of an episode terminating at any given time step is $(1-\gamma)$ \citep{Paternain2022}; however, the concept of a random stopping time might not be sensible in all applications. Another way to define the effective horizon is to study the geometric accumulation of weights. In this case, the effective horizon can be measured as $T_2(\gamma,\varepsilon) = \min\{K \in \N: \G_{t=0}^{K-1} [1] \geq 1 - \varepsilon\}$, where $\varepsilon \in (0,1)$ is a tolerance. Using either of these two definitions, the occupation measure can be said to describe the behavior of the system from the start time up to the effective horizon. Specifically, one may truncate the sum in Definition \ref{def:11-30-22-1} at the effective horizon to obtain a conceptual understanding of what the occupation measure describes.




Depending on the application, either $T_1$ or $T_2$ can provide a more sensible connection between discounted and finite-horizon problems. But are these two definitions related? The next proposition answers this affirmatively by showing that $T_1$ is actually a special case of $T_2$. 


\begin{proposition} \label{thm:11-9-22-1}
$T_1(\gamma) = T_2(\gamma,\varepsilon)$ when $\varepsilon$ is set to $\gamma^{\frac{1}{1-\gamma}} \approx \frac{1}{e}$.
\end{proposition}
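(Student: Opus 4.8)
The plan is to prove the identity $T_1(\gamma) = T_2(\gamma, \varepsilon)$ for the stated choice of $\varepsilon$ by unwinding the definition of $T_2$ and showing that the threshold condition defining it is met exactly when $K = T_1(\gamma) = \frac{1}{1-\gamma}$. First I would compute the discounted sum of ones explicitly: using the geometric series, $\G_{t=0}^{K-1}[1] = (1-\gamma)\sum_{t=0}^{K-1}\gamma^t = 1 - \gamma^K$. Thus the defining inequality $\G_{t=0}^{K-1}[1] \geq 1-\varepsilon$ becomes $1 - \gamma^K \geq 1 - \varepsilon$, which simplifies to $\gamma^K \leq \varepsilon$. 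So $T_2(\gamma,\varepsilon)$ is the smallest integer $K$ for which $\gamma^K \leq \varepsilon$.

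Next I would substitute the prescribed tolerance $\varepsilon = \gamma^{\frac{1}{1-\gamma}}$ into this condition. The inequality $\gamma^K \leq \gamma^{\frac{1}{1-\gamma}}$ holds, since $\gamma \in (0,1)$ makes $t \mapsto \gamma^t$ strictly decreasing, if and only if $K \geq \frac{1}{1-\gamma}$. Hence the smallest integer $K$ satisfying the condition is $K = \left\lceil \frac{1}{1-\gamma} \right\rceil$, which equals $\frac{1}{1-\gamma} = T_1(\gamma)$ whenever the latter is itself an integer, and more generally shows that $T_2$ selects exactly the effective horizon $T_1$ once one interprets $T_1$ through this geometric threshold. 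I would remark that the definition of $T_2$ already rounds up to the nearest natural number, so this matches $T_1$ in the natural sense, resolving any integrality subtlety.

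Finally I would verify the approximation $\gamma^{\frac{1}{1-\gamma}} \approx \frac{1}{e}$ to justify the second claim in the statement. The cleanest route is to take logarithms: $\ln\!\left(\gamma^{\frac{1}{1-\gamma}}\right) = \frac{\ln \gamma}{1-\gamma}$, and then analyze the limit as $\gamma \to 1^-$. Writing $\gamma = 1 - \delta$ with $\delta \to 0^+$ and expanding $\ln(1-\delta) = -\delta - \frac{\delta^2}{2} - \cdots$, one finds $\frac{\ln(1-\delta)}{\delta} \to -1$, so $\gamma^{\frac{1}{1-\gamma}} \to e^{-1}$. The main obstacle, such as it is, lies not in any of these elementary manipulations but in presenting the integrality of $T_1$ and $T_2$ consistently: $T_2$ is defined as an integer-valued minimum while $T_1(\gamma) = \frac{1}{1-\gamma}$ need not be an integer, so the claimed equality should be read as identifying the threshold value $\frac{1}{1-\gamma}$ rather than asserting a literal coincidence of integers for every $\gamma$. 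I would address this directly with a short remark so that the stated equality is unambiguous.
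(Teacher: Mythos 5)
Your proposal is correct and follows essentially the same route as the paper's proof: reduce the defining condition of $T_2$ to $\gamma^K \leq \varepsilon$ via the geometric sum, match the resulting threshold $\lceil \log\varepsilon/\log\gamma\rceil$ to $\frac{1}{1-\gamma}$ (the paper solves for $\varepsilon$ while you substitute it, a purely cosmetic difference), and obtain $1/e$ as the limit of $\gamma^{1/(1-\gamma)}$ as $\gamma \to 1^-$. Your explicit remark on the integrality mismatch is slightly more careful than the paper, which simply says it ignores the integer constraint.
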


The proof for Proposition \ref{thm:11-9-22-1} is in Appendix \includeappendix{\ref{appendix:thm}}{A}. Proposition \ref{thm:11-9-22-1} is illustrated in Figure \ref{fig:horizon}, where the effective horizon is plotted as a function of $\gamma$ for three different values of $\varepsilon$. With an understanding of the occupation measure as a visitation density describing behavior up to the effective horizon, we can begin to derive meaningful risk-related interpretations of discounted sum constraints. These interpretations lead directly to sensible recommendations for the design of C-MARL algorithms.

\begin{wrapfigure}{r}{0.4\textwidth}
    \vspace{-.2cm}
    \begin{center}
        \includegraphics[width=0.39\textwidth]{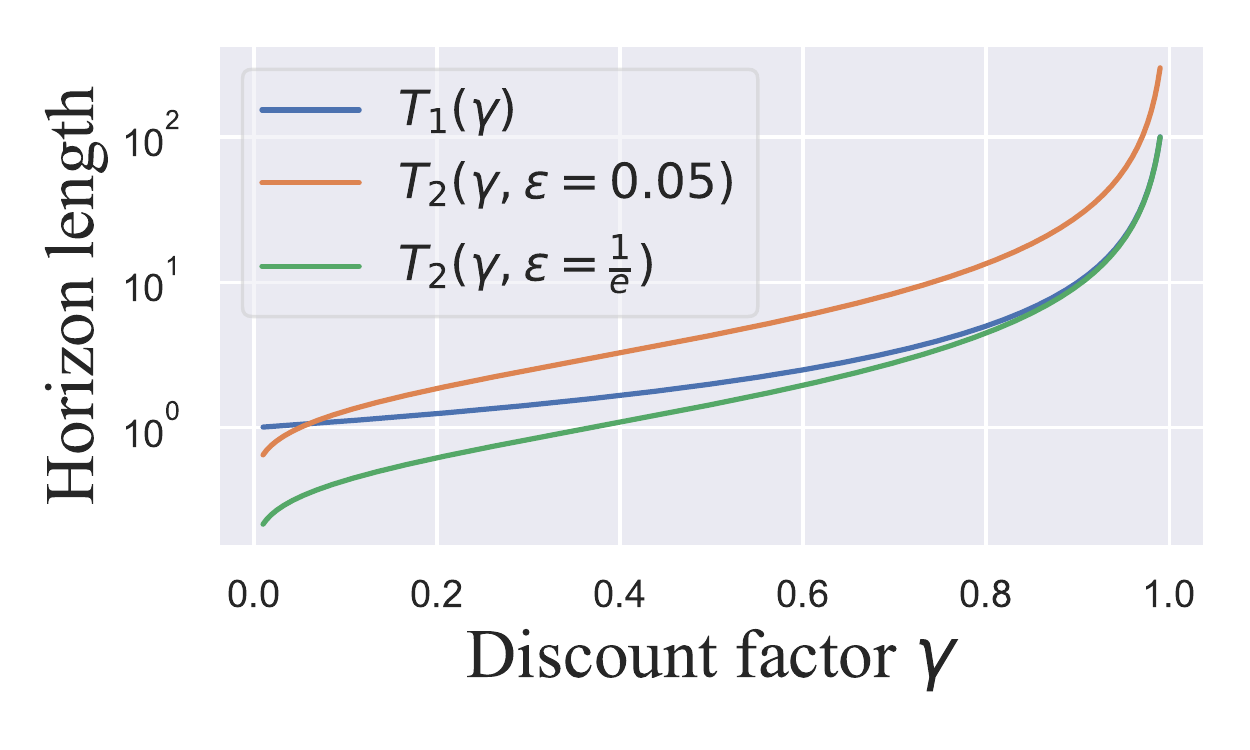}
    \end{center}
    \vspace{-.5cm}
    \caption{Effective horizon length as a function of $\gamma$.}
    \vspace{-.5cm}
    \label{fig:horizon}
\end{wrapfigure}

\section{Discounted risk metrics} \label{sec:DRM}

The discounted sum constraint can naturally be reinterpreted as a certain type of average constraint. In particular, Assumption \ref{assumption:bounded} ensures the equivalence $\E_{\tau \sim \M}[\G_{t=0}^\infty C(x_t)] = \E_{x \sim \mu^\pi_\gamma}[C(x)]$ \citep{Paternain2019}. This near-term average does not relate to any well-known risk metrics and hence does not provide a practical safety guarantee.  
In general, information about the mean of a distribution cannot be used to infer information about its tails; however, simple changes to the penalty function can yield information about either the \textit{probability} of incurring a constraint violation or the expected \textit{severity} of constraint violations. 

\begin{proposition}[Near-term probability of constraint violations] \label{thm:10-18-22-1} Suppose that for some $\delta_j \in [0,1]$ and $\alpha_j \in \R$\ , we have $\E_{\tau \sim \M}[\G_{t=0}^\infty I[C^j(x_t) \geq \alpha_j]] \leq \delta_j$.
Then under Assumption \ref{assumption:bounded}, $\textup{Pr}\{C^j(x) \geq \alpha_j \mid x \sim \mu^\pi_\gamma\} \leq \delta_j.$
\end{proposition}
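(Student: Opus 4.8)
The plan is to show that the hypothesized discounted-sum bound is nothing other than a restatement of the occupation-measure probability, so that the conclusion follows immediately. The crux is the observation that the equivalence $\E_{\tau \sim \M}[\G_{t=0}^\infty C(x_t)] = \E_{x \sim \mu^\pi_\gamma}[C(x)]$ invoked at the start of this section is not special to $C$: it holds verbatim for any bounded measurable function of the state. Since $g(x) := I[C^j(x) \geq \alpha_j]$ takes values in $\{0,1\}$ and is therefore bounded and measurable, I would apply this equivalence with $g$ in place of $C$ to obtain $\E_{\tau \sim \M}[\G_{t=0}^\infty I[C^j(x_t) \geq \alpha_j]] = \E_{x \sim \mu^\pi_\gamma}[I[C^j(x) \geq \alpha_j]]$, and then note that the expectation of an indicator is exactly the probability of the event it indicates.

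To make the extension to $g$ self-contained rather than merely citing the $C$ case, I would carry out the three standard interchanges directly. First, expand $\G_{t=0}^\infty I[\cdot] = (1-\gamma)\sum_{t=0}^\infty \gamma^t I[C^j(x_t) \geq \alpha_j]$ and pass the expectation inside the infinite sum; this is justified by dominated convergence, with the partial sums uniformly dominated by $(1-\gamma)\sum_{t} \gamma^t = 1$ since the indicator is bounded by $1$. Second, reduce each term $\E_{\tau \sim \M}[I[C^j(x_t) \geq \alpha_j]]$ to an integral against the time-$t$ visitation density, $\int_\X p_t^\pi(x) I[C^j(x) \geq \alpha_j]\,dx$, by marginalizing the trajectory law $\M$ onto its $t$-th state coordinate. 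Third, exchange the sum and the integral, regrouping $(1-\gamma)\sum_{t} \gamma^t p_t^\pi(x)$ into the occupation measure $\mu^\pi_\gamma(x)$ from Definition \ref{def:11-30-22-1}; this interchange is licensed either by Tonelli (the integrand is nonnegative) or by dominated convergence using the boundedness of $p_t^\pi$ from Assumption \ref{assumption:bounded}. The result is $\int_\X \mu^\pi_\gamma(x) I[C^j(x) \geq \alpha_j]\,dx = \textup{Pr}\{C^j(x) \geq \alpha_j \mid x \sim \mu^\pi_\gamma\}$.

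Combining this identity with the hypothesis $\E_{\tau \sim \M}[\G_{t=0}^\infty I[C^j(x_t) \geq \alpha_j]] \leq \delta_j$ then yields the claimed chance-constraint bound directly. I do not expect any genuine obstacle here: the only technical points are the interchanges of expectation, summation, and integration, all of which are routine given the uniform bound of $1$ on the indicator and the boundedness assumptions on $p_t^\pi$. The conceptual content lies not in the difficulty of the argument but in recognizing that feeding an indicator, rather than the raw constraint $C^j$, through the occupation-measure machinery converts an opaque discounted-sum bound into an interpretable probability under $\mu^\pi_\gamma$.
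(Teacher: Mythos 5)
Your proof is correct and follows essentially the same route as the paper's: both apply the occupation-measure equivalence $\E_{\tau \sim \M}[\G_{t=0}^\infty g(x_t)] = \E_{x \sim \mu^\pi_\gamma}[g(x)]$ to the bounded indicator $g(x) = I[C^j(x) \geq \alpha_j]$ and then identify the expectation of an indicator with a probability. The only difference is that you spell out the interchange-of-limits details that the paper delegates to a citation of \cite{Paternain2019}, which is a harmless (and arguably welcome) elaboration.
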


\begin{proof}
$\E_{\tau \sim \M}[\G_{t=0}^\infty I[C^j(x_t) \geq \alpha_j]] = \E_{x \sim \mu^\pi_\gamma}[I[C^j(x) \geq \alpha_j]] = \textup{Pr}\{C^j(x) \geq \alpha_j \mid x \sim \mu^\pi_\gamma\}$. The first equality uses Assumption \ref{assumption:bounded} to apply an equivalence established in e.g. \cite{Paternain2019}.
The second equality follows from the definition of expectation.
\end{proof}


Proposition \ref{thm:10-18-22-1} makes it easy to enforce chance constraints using primal-dual methods. When the penalty term $C^j(x)$ is replaced by the quantity $I[C^j(x) \geq \alpha_j] - \delta_j$, the primal-dual algorithm enforces $\E_{\tau \sim \M}[\G_{t=0}^\infty I[C^j(x_t) \geq \alpha_j]] - \delta_j \leq 0$. By Proposition \ref{thm:10-18-22-1}, this guarantees that $\textup{Pr}\{C^j(x) \geq \alpha_j \mid x \sim \mu^\pi_\gamma\} \leq \delta_j$. Because the probability of constraint violations is defined with $x$ varying over $\mu^\pi_\gamma$, we call the resulting guarantee a \textit{near-term} or \textit{discounted chance constraint.} This can be repeated for each $j \in \{1,\ldots,m\}$, providing a set of bounds on the probability of violating \emph{each} constraint by more than its tolerance $\alpha_j$. On the other hand, we can control the probability of violating \textit{any} constraint as follows. Define the statement $C(x) \geq \alpha$ to be true if $C^j(x) \geq \alpha_j \ \forall\ j \in \{1,\ldots,m\},$ and false otherwise. Then, applying Proposition \ref{thm:10-18-22-1} to the test condition $C(x) \geq \alpha$ will result in a bound on $\textup{Pr}\{C(x) \geq \alpha \mid x \sim \mu^\pi_\gamma\}$.

While discounted chance constraints enable one to control the \textit{probability} of extreme events in the near future, conditional value at risk constraints \citep{Rockafellar2000} afford control over the \textit{severity} of such events.

\begin{definition}[\cite{Rockafellar2000}]
Given a risk level $\beta \in (0,1)$, a cost $h: \X \rightarrow \R$, and a probability density $\mu$ on $\X$, the value at risk (VaR) and conditional value at risk (CVaR) are defined as: \begin{gather*}
    \textup{VaR}(\beta, h, \mu) = \min \{ \alpha \in \R: \textup{Pr}\{h(x) \leq \alpha \mid x \sim \mu\} \geq \beta\}, \\
    \textup{CVaR}(\beta,h,\mu) = \frac{1}{1-\beta} \int_{h(x) \geq \textup{VaR}(\beta,h,\mu)} h(x) \mu(x) dx.
\end{gather*}
\end{definition}

In other words, $\textup{VaR}(\beta, h, \mu)$ is the least upper bound on $h$ that can be satisfied with probability $\beta$, while $\textup{CVaR}(\beta, h, \mu)$ describes the expected value in the VaR-tail of the distribution of $h$. CVaR characterizes the expected severity of extreme events, which can be defined precisely as the $(1-\beta)$ fraction of events $x$ with the worst outcomes as ranked by the cost incurred, $h(x)$. The VaR and CVaR for $h(x) = x,$ when $x$ follows a standard normal distribution, are illustrated in Figure \ref{fig:var_cvar}, where the shaded region has an area of $(1-\beta)$. For the rest of the paper, we assume that the cdf of $h(x)$ is continuous when $x \sim \mu$. For further details and for cases in which this assumption does not hold, we refer the reader to \cite{Rockafellar2002}. 




\begin{proposition}[Near-term CVaR] \label{thm:10-20-22-1}
For any $\alpha_j \geq 0$, suppose that $\E_{\tau \sim \M}[\G_{t=0}^\infty [[C^j(x_t) - \alpha_j]_+]] \leq \delta_j$. Then, $\textup{CVaR}(\beta,C^j,\mu^\pi_\gamma) \leq \alpha_j + (1-\beta)^{-1} \delta_j.$
\end{proposition}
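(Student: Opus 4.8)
The plan is to reduce the trajectory-level hypothesis to a single expectation under the occupation measure, and then connect that expectation to $\textup{CVaR}$ through the Rockafellar–Uryasev representation. First I would invoke the same occupation-measure equivalence used in the proof of Proposition \ref{thm:10-18-22-1}: since $[C^j(x)-\alpha_j]_+$ is a bounded function of the state under Assumption \ref{assumption:bounded} (as $C^j$ is bounded), we have
$$\E_{x \sim \mu^\pi_\gamma}\big[[C^j(x)-\alpha_j]_+\big] = \E_{\tau \sim \M}\big[\G_{t=0}^\infty [[C^j(x_t)-\alpha_j]_+]\big] \leq \delta_j.$$
This converts the hypothesis into a statement purely about $\mu^\pi_\gamma$, which is the only distribution appearing in the $\textup{CVaR}$ we must bound.

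The clean route is then to use the Rockafellar–Uryasev identity from \cite{Rockafellar2000}, namely that $\textup{CVaR}(\beta,C^j,\mu^\pi_\gamma)$ equals the infimum over $\alpha \in \R$ of $\alpha + (1-\beta)^{-1}\E_{x \sim \mu^\pi_\gamma}[[C^j(x)-\alpha]_+]$. Because $\textup{CVaR}$ is this infimum, evaluating the bracketed expression at the particular value $\alpha = \alpha_j$ yields an upper bound, and combining it with the display above gives $\textup{CVaR}(\beta,C^j,\mu^\pi_\gamma) \leq \alpha_j + (1-\beta)^{-1}\delta_j$ immediately.

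If instead one wishes to argue directly from the integral definition of $\textup{CVaR}$ stated in the paper, I would prove the variational bound by hand. Writing $V = \textup{VaR}(\beta,C^j,\mu^\pi_\gamma)$ and splitting $C^j(x) = (C^j(x)-\alpha_j) + \alpha_j$ inside $\int_{C^j(x)\geq V} C^j(x)\,\mu^\pi_\gamma(x)\,dx$, the continuity-of-cdf assumption forces $\textup{Pr}\{C^j(x)\geq V \mid x\sim\mu^\pi_\gamma\} = 1-\beta$, so the $\alpha_j$-piece contributes exactly $\alpha_j(1-\beta)$. For the remaining piece I would apply the pointwise bound $y \leq [y]_+$ and enlarge the integration domain from $\{C^j \geq V\}$ to all of $\X$, obtaining $\int_{C^j\geq V}(C^j-\alpha_j)\,\mu^\pi_\gamma\,dx \leq \E_{x\sim\mu^\pi_\gamma}[[C^j-\alpha_j]_+] \leq \delta_j$; dividing through by $(1-\beta)$ recovers the claim.

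The proof is mostly bookkeeping rather than deep, so the main points to watch are: applying the occupation-measure equivalence to the bounded integrand $[C^j-\alpha_j]_+$ (not to $C^j$ directly), and using the continuity assumption on the cdf of $C^j(x)$ under $\mu^\pi_\gamma$ to guarantee the VaR-tail carries mass exactly $1-\beta$. Notably, the hypothesis $\alpha_j \geq 0$ is not actually needed for the inequality, since both the Rockafellar–Uryasev identity and the direct computation hold for arbitrary $\alpha_j \in \R$; the nonnegativity is relevant only to the \emph{interpretation} of $\alpha_j$ as a violation tolerance relative to the safe set $\S = \{x : C(x) \leq 0\}$.
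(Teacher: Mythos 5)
Your proof is correct and follows essentially the same route as the paper's: convert the discounted-sum hypothesis into an expectation under $\mu^\pi_\gamma$ via the occupation-measure identity, then bound $\textup{CVaR}$ by the Rockafellar--Uryasev function $\alpha_j + (1-\beta)^{-1}\E_{x\sim\mu^\pi_\gamma}[[C^j(x)-\alpha_j]_+]$ evaluated at the given $\alpha_j$. Your direct verification of that variational bound from the integral definition, and your observation that $\alpha_j \geq 0$ is not actually needed, are both correct but supplementary to the paper's argument.
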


\begin{proof}
Under Assumption \ref{assumption:bounded}, the identity $\E_{\tau \sim \M}[\G_{t=0}^\infty [C^j(x_t) - \alpha_j]_+] = \E_{x \sim \mu^\pi_\gamma}[[C^j(x) - \alpha_j]_+]$ holds \citep{Paternain2019}. Next, we use the fact that the CVaR is the minimum value of the convex function in $\alpha_j$ given by $F(\alpha_j \mid \beta,C^j,\mu^\pi_\gamma) := \alpha_j + (1-\beta)^{-1} \E_{x \sim \mu^\pi_\gamma}[[C^j(x)-\alpha_j]_+]$ \citep{Rockafellar2000}; thus, $F$ provides an upper bound on CVaR. Some rearranging leads to the result.
\end{proof}

\begin{wrapfigure}{r}{0.4\textwidth}
    \vspace{-1cm}
    \begin{center}
    \includegraphics[width= 0.4\textwidth]{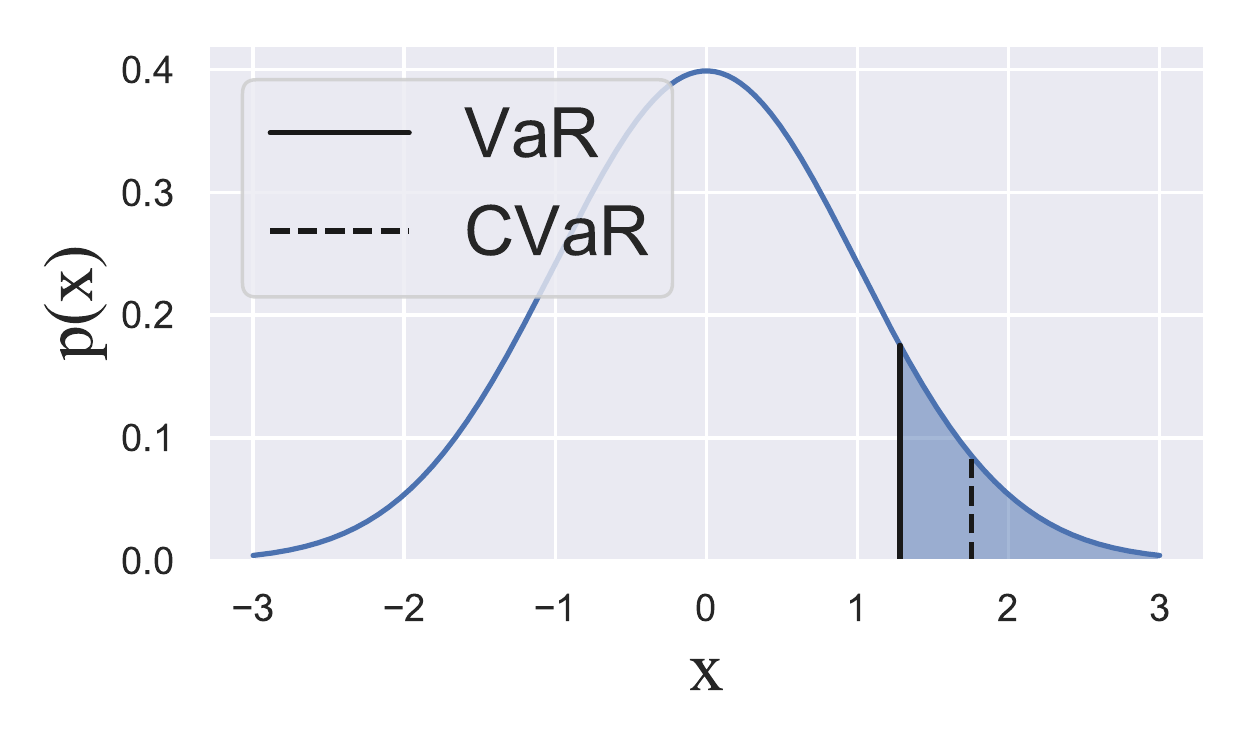}
    \end{center}
    \vspace{-0.5cm}
    \caption{Example of VaR and CVaR at risk level $\beta = 0.9$.}
    \vspace{-0.2cm}
    \label{fig:var_cvar}
\end{wrapfigure}

Similar to the chance-constrained case, Proposition \ref{thm:10-20-22-1} makes it easy to enforce CVaR constraints in the primal-dual algorithm. Here, the penalty term used is $[C^j(x) - \alpha_j]_+ - \delta_j$. Using this penalty, the algorithm enforces $\E_{\tau \sim \M}[\G_{t=0}^\infty [[C^j(x_t) - \alpha_j]_+]] - \delta_j \leq 0$, which by Proposition \ref{thm:10-20-22-1} implies $\textup{CVaR}(\beta,C^j,\mu^\pi_\gamma) \leq \alpha_j + (1-\beta)^{-1} \delta_j.$ By repeating for each $j \in \{1,\ldots,m\}$, we can bound the expected severity of the constraint violations for each of the $m$ constraints. Because the CVaR constraint is defined with $x$ varying over $\mu^\pi_\gamma$, the resulting guarantee is called a \textit{near-term} or \textit{discounted CVaR constraint.}

To obtain a tight bound on the CVaR, $\alpha_j$ must be set to $\textup{VaR}(\beta,C^j,\mu^\pi_\gamma)$, which minimizes the function $F$ introduced in the proof of Proposition \ref{thm:10-20-22-1} \citep{Rockafellar2000}. Unfortunately, the VaR is not known ahead of time. \cite{Chow2018} include $\alpha_j$ as an optimization variable in the learning procedure, but extending their technique to the multiagent setting is not straightforward. Our approach is to include it as a tunable hyperparameter. Simulation results in Section \ref{sec:simulations} show that it is easy to choose $\alpha_j$ to give a nearly tight bound.


\newversion{\section{Primal-dual value functions} \label{sec:value}
In this section, we investigate challenges with value estimation in the primal-dual regime. The fact that the reward to each agent is constantly changing (due to dual variable updates) makes it difficult to accurately estimate state values. To quantify this decrease in accuracy, we introduce the value functions induced by the joint policy $\pi$,
$\{V_\pi^i: \X \times \R \rightarrow \R\}_{i \in \n},\ \{V_{R,\pi}^i: \X \rightarrow \R\}_{i \in \n},\ V_{C,\pi}: \X \rightarrow \R^m$ where: 
 \begin{gather}
    V^i_\pi(x,\lambda) = \E_{\tau \sim \M} [\G_{t=0}^\infty r_t^i - \lambda^Tc_t \mid x_0 = x], \label{eqn:11-20-22-1}\\ V^i_{R,\pi}(x) = \E_{\tau \sim \M} [\G_{t=0}^\infty r_t^i \mid x_0 = x],\quad
    V_{C,\pi}(x) = \E_{\tau \sim \M} [\G_{t=0}^\infty c_t \mid x_0 = x].
\end{gather}Note that $c_t$ could be modified as indicated in Section \ref{sec:DRM}, and the following results would hold for the modified penalty function.

Obviously, it is impossible to learn an accurate value function when $\lambda$ is unknown and changing; however, simply making $\lambda$ available to a value function approximator does not guarantee good generalization beyond previously seen values of $\lambda$. Having a good estimate of the \emph{derivative} of the value function with respect to $\lambda$ will ensure accuracy under small perturbations to the dual variables. Fortunately, this derivative is easy to obtain. Formally, under Assumption \ref{assumption:bounded}, we can write $V_\pi^i(x,\lambda) = V_{R,\pi}^i(x) - \lambda^T V_{C,\pi}(x)$ \citep{Tessler2019}, and therefore, $\nabla_\lambda V_\pi^i(x,\lambda) = -V_{C,\pi}(x)$. By learning $V_{R,\pi}^i$ and $V_{C,\pi}$ as separate functions and then combining them using the true value of $\lambda$, we can construct a value estimate whose derivative with respect to the dual variables is as accurate as our estimate of $V_{C,\pi}$ itself. This estimate will be more robust to small changes in $\lambda$. We will refer to this type of value estimate as a \emph{structured value function} or a \emph{structured critic}. 

\begin{proposition} \label{thm:11-14-22-1}
Let $\bar{c} = \E_{x \sim \mu^\pi_\gamma}[C(x)]$ and $\Sigma_C^2 = \E_{x \sim \mu^\pi_\gamma}[(\bar{c} - C(x))(\bar{c} - C(x))^T]$. Suppose $\lambda$ is randomly varying with mean $\bar{\lambda}$ and variance $\Sigma_\lambda^2$. Using a structured value function approximator can reduce the mean square temporal difference error by up to $\textup{Tr}[\Sigma_\lambda^2 \cdot (\Sigma_C^2 + \bar{c}\bar{c}^T)]$. 
\end{proposition}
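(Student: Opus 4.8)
The plan is to make the informal claim precise by pitting two critics against each other and showing that the excess mean-square temporal difference (TD) error of the naive one is exactly the stated trace. By a \emph{structured} critic I mean one that learns $V_{R,\pi}^i$ and $V_{C,\pi}$ separately and evaluates $V_{R,\pi}^i(x) - \lambda^T V_{C,\pi}(x)$ with the \emph{current} dual variable $\lambda$; by an \emph{unstructured} critic I mean one that has fit the value at the mean dual variable, $\hat V(x) = V_{R,\pi}^i(x) - \bar\lambda^T V_{C,\pi}(x)$, and does not react to fluctuations in $\lambda$. Both are scored against the same augmented reward $\tilde r_t^i = r_t^i - \lambda^T c_t$, so the only difference is which multiplier enters the bootstrap target.

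First I would write the one-step TD error for each critic using the Bellman relations for $V_{R,\pi}^i$ and $V_{C,\pi}$. Abbreviating the component TD errors as $\delta_{R,t}$ and $\delta_{C,t}$ and writing $A_t = \delta_{R,t} - \bar\lambda^T\delta_{C,t}$ for the part common to both, a short computation gives
\[ \delta_t^{\mathrm{str}} = A_t - (\lambda - \bar\lambda)^T \delta_{C,t}, \qquad \delta_t^{\mathrm{unstr}} = A_t - (\lambda - \bar\lambda)^T c_t, \]
with $c_t = C(x_t)$. The structural point is that inserting the correct $\lambda$ into the target cancels the full immediate penalty $c_t$ down to the (ideally negligible) bootstrap residual $\delta_{C,t}$.

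Second I would form the two mean-square TD errors, taking the expectation over the on-policy discounted state distribution — which by the occupation-measure identity $\E_{\tau\sim\M}[\G_{t=0}^\infty g(x_t)] = \E_{x\sim\mu^\pi_\gamma}[g(x)]$ is an expectation under $\mu^\pi_\gamma$ — and over the random $\lambda$. Because $\lambda$ is independent of the trajectory with mean $\bar\lambda$ and covariance $\Sigma_\lambda^2$, every cross term linear in $(\lambda-\bar\lambda)$ vanishes, leaving
\[ \E[(\delta^{\mathrm{unstr}})^2] - \E[(\delta^{\mathrm{str}})^2] = \mathrm{Tr}\big(\Sigma_\lambda^2\, \E_{x\sim\mu^\pi_\gamma}[C(x)C(x)^T]\big) - \mathrm{Tr}\big(\Sigma_\lambda^2\, \E_{x\sim\mu^\pi_\gamma}[\delta_{C}\delta_{C}^T]\big). \]
Using the definition of $\Sigma_C^2$ to identify $\E_{x\sim\mu^\pi_\gamma}[C(x)C(x)^T] = \Sigma_C^2 + \bar c\bar c^T$, and noting the subtracted quantity is a trace of a product of positive semidefinite matrices and hence nonnegative, bounds the reduction by $\mathrm{Tr}[\Sigma_\lambda^2(\Sigma_C^2 + \bar c\bar c^T)]$. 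Equality holds exactly when $V_{C,\pi}$ is learned perfectly so that $\delta_C \equiv 0$, which is what justifies the phrase ``up to.''

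The main obstacle I anticipate is bookkeeping rather than conceptual. Because the $\G$ operator carries a factor $(1-\gamma)$ in front of $\tilde r_t$, one must settle the precise TD convention and check whether a factor of $(1-\gamma)^2$ should accompany the trace; this is where the stated constant is most delicate. The other care points are verifying that the cross terms genuinely vanish, which requires $\lambda$ to be statistically independent of the sampled transition, and that the same state distribution is used to score both critics, since otherwise the clean cancellation in the first step and the positive-semidefinite bound in the last step both break down.
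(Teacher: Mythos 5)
Your argument is correct and follows essentially the same route as the paper's: both isolate the $\lambda$-dependent part of the TD error, showing that the unstructured critic carries the full term $(\lambda-\bar\lambda)^T c_t$ while the structured one carries only the bootstrap residual $(\lambda-\bar\lambda)^T \delta_{C,t}$, then use independence of $\lambda$ from the transition to kill the cross terms and recover $\textup{Tr}[\Sigma_\lambda^2(\Sigma_C^2+\bar c\bar c^T)]$ by sending that residual to zero. Your write-up is if anything slightly cleaner---the ``up to'' is justified directly by nonnegativity of $\textup{Tr}(\Sigma_\lambda^2\,\E[\delta_C\delta_C^T])$, where the paper instead routes the identical computation through a dataset-level bias--variance decomposition and an assumed parity of the two critics at $\bar\lambda$---and your worry about a $(1-\gamma)^2$ factor is moot, since the paper's own TD convention drops the $(1-\gamma)$ normalization from the one-step error.
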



\begin{wrapfigure}{r}{0.4\textwidth}
    \vspace{-.5cm}
    \begin{center}
    \includegraphics[width = 0.4\textwidth]{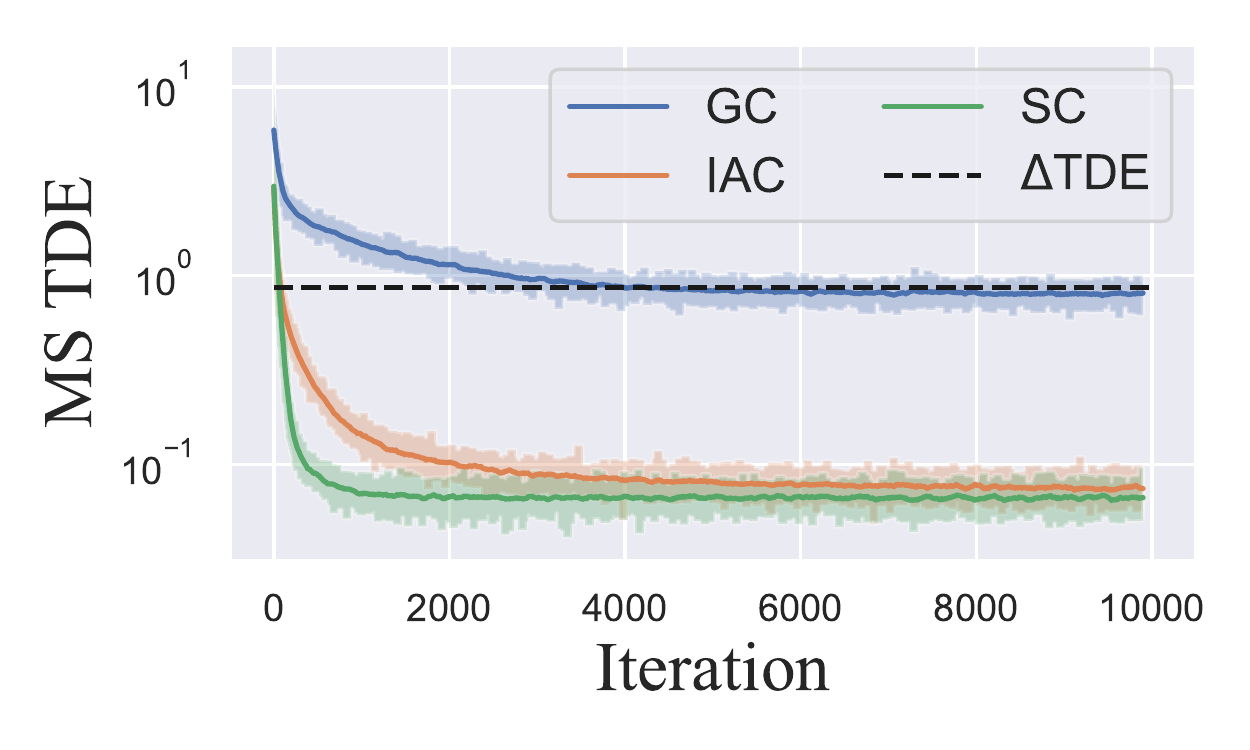}
    \end{center}
    \vspace{-.5cm}
    \caption{Temporal difference error trajectories in a simple policy evaluation task.}
    \vspace{-.2cm}
    \label{fig:CCA}
\end{wrapfigure}

The proof of Proposition \ref{thm:11-14-22-1} is in Appendix \includeappendix{\ref{appendix:thm}}{A}. Figure \ref{fig:CCA} illustrates Proposition \ref{thm:11-14-22-1} in a simple value estimation task with quadratic rewards, linear dynamics and policies, linear state constraints, and randomly varying $\lambda$. The \textit{generic critic} (GC) is a value function modeled as a quadratic function of the state only. The \textit{input-augmented critic} (IAC) is a value function modeled as an unknown quadratic function of the state and dual variables, while the \textit{structured critic} (SC) is modeled using $\hat{V}_{\pi}^i = \hat{V}_{R,\pi}^i - \lambda^T\hat{V}_{C,\pi}$ with quadratic $\hat{V}_{R,\pi}^i$ and linear $\hat{V}_{C,\pi}$ trained on their respective signals. 

The dashed line in Figure \ref{fig:CCA} is at the value $\textup{Tr}[\Sigma_\lambda^2 \cdot (\Sigma_C^2 + \bar{c}\bar{c}^T)]$ predicted in Proposition \ref{thm:11-14-22-1}. 
In this simple value estimation task, high accuracy can be achieved when conditioning on the randomly varying $\lambda$; however, having an accurate estimate of $\nabla_\lambda V_{\pi}^i$ by using a structured critic is also shown to help.
Although in practice $\bar{\lambda}$ and $\Sigma_{\lambda}^2$ change over time, the simulation results in Section \ref{sec:simulations} confirm that using structured critics improves performance. 
The loss function for value function approximation is therefore given by: \begin{align}
    TDE(x,x') = [R^i(x^i) + \gamma \hat{V}_{R,\pi}^i((x^i)') - \hat{V}_{R,\pi}^i(x^i)]^2 + \|C(x) + \gamma \hat{V}_{C,\pi}(x') - \hat{V}_{C,\pi}(x)\|_2^2 \label{eqn:11-21-22-1}
\end{align} 
where $x \in \X$ and $x' \sim f^\pi(x)$. Equation \eqref{eqn:11-21-22-1} is simply a sum of squared temporal difference errors over the set of $m+1$ value functions. For algorithmic details, we refer the reader to Appendix \includeappendix{\ref{appendix:simulations}}{B}. 

}{}

\section{Simulations} \label{sec:simulations}
In our simulations, we sought to demonstrate the effectiveness of the penalty modifications and structured critic proposed in sections \ref{sec:DRM} and \ref{sec:value}. We tested our findings in a modified multiagent particle environment\footnote{Code for the environments is available at \url{github.com/dtabas/multiagent-particle-envs}.} \citep{Lowe2017} with two agents pursuing individual objectives subject to a constraint on the joint state. The state of each agent is its position and velocity in $\R^2$, i.e. $x^i = \m{y^{iT} & v^{iT}}^T$ where $y^i \in \R^2$ is the position and $v^i \in \R^2$ is the velocity of agent $i$. The objective of each agent is to drive its position $y^i$ to a landmark $y^{i*} \in \R^2$, while making sure that the agent ensemble satisfies the safety constraint. The reward and constraint functions are given by:
\begin{align}
    R^i(y^i) = -\xi_i \|y^i - y^{i*}\|_2^2,\quad C(y) = \textbf{1}^T y
\end{align} where $\xi_i > 0$ is a constant and $y = \m{y^{1T} & y^{2T}}^T$ is the position of the agent ensemble.

\begin{wrapfigure}{r}{0.4\textwidth}
    \vspace{-.5cm}
  \begin{center}
    \includegraphics[width=0.38\textwidth]{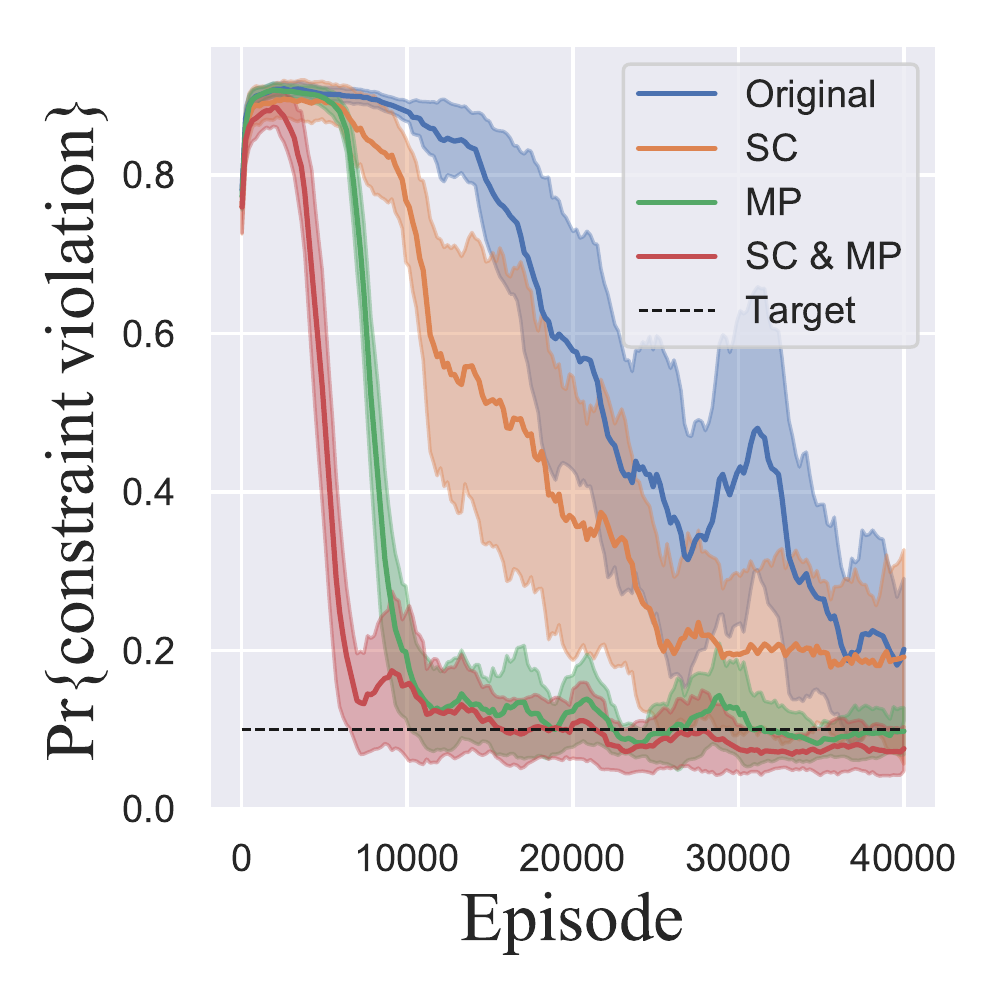}
  \end{center}
  \vspace{-.5cm}
  \caption{$\textup{Pr}\{C(x) \geq 0.1 \mid x \sim \mu^\pi_\gamma\}$ measured throughout training. Key: SC = structured critic, MP = modified penalty (Prop. \ref{thm:10-18-22-1}). Both modifications speed convergence to a safe policy. The shaded region represents $\pm 1$ standard deviation across 5 training runs.}
  \vspace{-.2cm}
  \label{fig:sim_chance}
\end{wrapfigure}

The landmark $y^* = \m{y^{1*T} & y^{2*T}}^T$ is stationed outside of the safe region $\S = \{y \mid C(y) \leq 0\}.$ Thus, the agents cannot both reach their goals while satisfying $C(y) \leq 0$. 
To train the agents to interact in this environment, we used a modified version of the EPyMARL codebase\footnote{Code for the algorithms is available at \url{github.com/dtabas/epymarl}.} \citep{Papoudakis2020}. We tested several MARL algorithms, including MADDPG \citep{Lowe2017}, COMA \citep{Foerster2018}, and MAA2C \citep{Papoudakis2020}. We decided to use the MAA2C algorithm because it consistently produced the best results and because as a value function-based algorithm, it provided the most straightforward route to implementing the changes proposed in Section \ref{sec:value}. Details of the algorithm, pseudocode, hyperparameters, and supplementary simulation results are provided in Appendix \includeappendix{\ref{appendix:simulations}}{B}.

For each risk metric described in Section \ref{sec:DRM}, we tested the convergence of the agents to a safe policy with and without modifications to the penalty and value functions. Figure \ref{fig:sim_chance} shows the results when we make the substitution $C(x) \leftarrow I[C(x) \geq \alpha] - \delta$ in the penalty function to enforce a chance constraint, $\textup{Pr}\{C(x) \geq \alpha \mid x \sim \mu^\pi_\gamma\} \leq \delta$ with $\alpha$ and $\delta$ each set to $0.1$. The modified penalty function performs the best as a chance constraint-enforcing signal (red and green lines in Figure~\ref{fig:sim_chance}). Whether or not the penalty function is modified, the structured critic 
finds safer policies throughout training (red vs. green and orange vs. blue lines).

Figure \ref{fig:sim_cvar} shows the results when we make the substitution $C(x) \leftarrow [C(x) - \alpha]_+ - \delta$ in the penalty function to enforce the constraint $\textup{CVaR}(\beta,C,\mu^\pi_\gamma) \leq \alpha + (1-\beta)^{-1} \delta$. Using the modified penalty (red and green lines in Figure \ref{fig:sim_cvar}) drives the CVaR upper bound (drawn in dashed lines) to the target value, and due to the choice of $\alpha$, this bound is nearly tight. On the other hand, using the original penalty results in an overly conservative policy that achieves low risk at the expense of rewards (right panel). We also point out that when using the modified penalty with the structured critic, the CVaR is lower throughout training compared to when the generic critic is used, indicating improved effectiveness in enforcing limits on risk.

We chose $\alpha$ using the following heuristic, to make the bound on CVaR nearly tight. The ``correct'' value of $\alpha$ that would achieve a tight bound is $\textup{VaR}(\beta,C,\mu^\pi_\gamma)$. Moreover, the upper bound that we used is convex and continuously differentiable in $\alpha$ \citep{Rockafellar2000}; therefore, small errors in $\alpha$ will lead to small errors in the upper bound on CVaR, and any approximation of the VaR will suffice. We obtained an approximation simply by running the simulation once with $\alpha$ set to zero and computing $\textup{VaR}(\beta,C,\mu^\pi_\gamma)$ over some test trajectories. If necessary, the process could be repeated additional times. Alternatively, $\alpha$ could be tuned adaptively by computing VaR online, but the stability of such a procedure would need further investigation. 

\begin{figure}
  \begin{center}
    \includegraphics[width=0.8\textwidth]{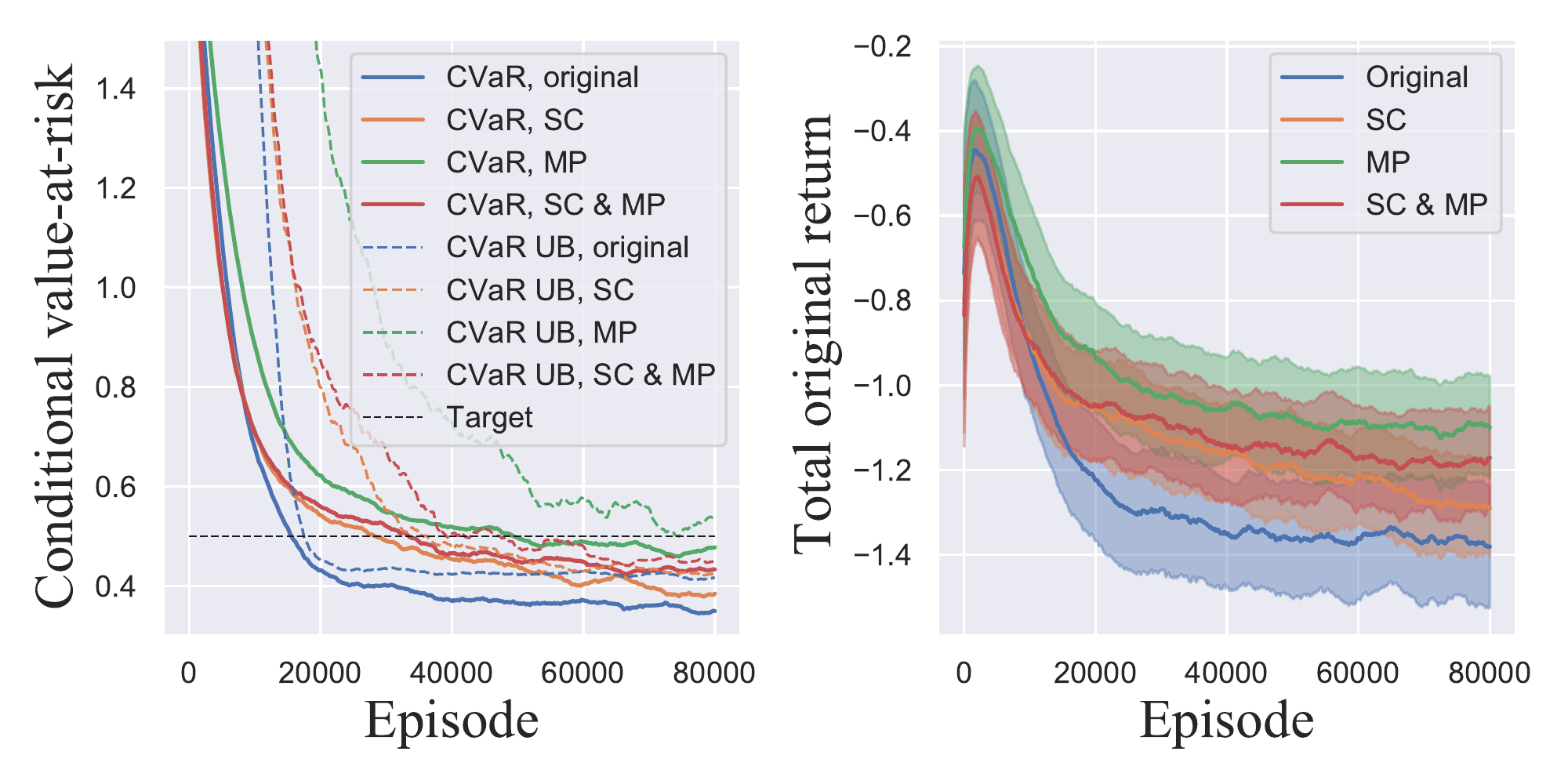}
  \end{center}
  \vspace{-.5cm}
  \caption{$\textup{CVaR}(\beta=0.9,C,\mu^\pi_\gamma)$ measured throughout training. Key: SC = structured critic, MP = modified penalty (Prop. \ref{thm:10-20-22-1}). The dashed lines represent the CVaR upper bound used in Prop. \ref{thm:10-20-22-1}. 
  The panel on the right shows progress toward the original objective through the total original returns, $\sum_{i=1}^2 \G_{t=0}^T r^i_t,$ without penalty terms. The shaded region represents $\pm 1$ standard deviation across 5 training runs. The rewards increase then decrease because the agents first learn to navigate towards the landmark, which is outside the safe region, then learn to back off to satisfy the constraint.}
  \label{fig:sim_cvar}
\end{figure}

\section{Conclusion}
In this paper, we studied the effect of primal-dual algorithms on the structure of C-MARL problems. First, we used the occupation measure to study the effect of the penalty term on safety. We showed that using the constraint function as the penalty enforces safety only in expectation, but by making simple modifications to the penalty term, one may enforce meaningful probabilistic safety guarantees, namely, chance and CVaR constraints. These risk metrics are defined over the occupation measure, leading to notions of safety in the near term. 
Next, we studied the effect of the penalty term on the value function. When the dual variable and constraint evaluation signals are available, it is easy to model the relationship between the penalty term and the value function. By exploiting this structure, the accuracy of the value function can be improved. We demonstrated the usefulness of both of these insights in a constrained multiagent particle environment, showing that convergence to a low-risk policy is accelerated. 
One open question is the effect of primal-dual methods on game outcomes. Some agents might pay a higher price than others for modifying their policies to satisfy system-wide constraints. Understanding and mitigating this phenomenon will be the focus of future work.

 \acks{D. Tabas and A. S. Zamzam would like to thank Patrick Emami and Xiangyu Zhang for helpful discussions in the early stages of this work, and Georgios Papoudakis for advice on software implementation. This work is partially supported by the National Science Foundation Graduate Research Fellowship Program under Grant No. DGE-2140004. Any opinions, findings, conclusions, or recommendations expressed in this material are those of the authors and do not necessarily reflect the views of the National Science Foundation.
This work was authored in part by the National Renewable Energy Laboratory (NREL), operated by Alliance for Sustainable Energy, LLC, for the U.S. Department of Energy (DOE) under Contract No. DE-AC36-08GO28308. The work of A. S. Zamzam was supported by the Laboratory Directed Research and Development (LDRD) Program at NREL. The views expressed in the article do not necessarily represent the views of the DOE or the U.S. Government. The U.S. Government retains and the publisher, by accepting the article for publication, acknowledges that the U.S. Government retains a nonexclusive, paid-up, irrevocable, worldwide license to publish or reproduce the published form of this work, or allow others to do so, for U.S. Government purposes.}

\bibliography{references2}

\includeappendix{

\newpage

\appendix

\section{Theoretical results} \label{appendix:thm}
\subsection{Proof of Proposition \ref{thm:10-17-22-1}}
Applying the definition of $\mu^\pi_\gamma$, we have $\int_{\X} \mu^\pi_\gamma(x)dx = \int_{\X} \G_{t=0}^\infty p_t^\pi(x) dx$. Using the Dominated Convergence Theorem, we can exchange the order of the sum and integral. Each individual $p_t^\pi$ integrates to $1$. The geometric sum property ensures that the resulting expression evaluates to $1$. 

\subsection{Proof of Proposition \ref{thm:10-17-22-2}}
\begin{enumerate}
    \item By definition, we have $\lim_{\gamma \rightarrow 0^+} \mu^\pi_\gamma(x) = 
    \lim_{\gamma \rightarrow 0^+} \G_{t=0}^\infty p_t^\pi(x)$. Using Tannery's theorem, we can exchange the order of the limit and the infinite sum. The zeroth term in the sum evaluates to $p_0(x)$ and all other terms evaluate to $0$. 
    \item Assume $\lim_{t \rightarrow \infty} p_t^\pi$ exists, and denote it $p_\infty^\pi$. Using the triangle inequality, we have \begin{align}
        |\mu^\pi_\gamma(x) - p_\infty^\pi(x)| &\leq \G_{t=0}^\infty |p_t^\pi(x) - p_\infty^\pi(x)| \\
        &= \G_{t=0}^{N} |p_t^\pi(x) - p_\infty^\pi(x)| + \G_{t=N+1}^\infty |p_t^\pi(x) - p_\infty^\pi(x)| \label{eqn:10-19-22-1}
    \end{align} for some $N \in \N$. Since $p_t^\pi(x) \rightarrow p_\infty^\pi(x)$, we can choose $N$ large enough to make the second term in \eqref{eqn:10-19-22-1} arbitrarily small. Then, using boundedness of $p_t^\pi$ for all $t$, we can take $\gamma \rightarrow 1^-$ to make the first term arbitrarily small.
\end{enumerate}

\subsection{Proof of Proposition \ref{thm:11-9-22-1}}
By the geometric sum property, we have
    $T_2(\gamma,\varepsilon) = \min\{K \in \N: \G_{t=0}^{K-1}[1] \geq 1 - \varepsilon\} = \min\{K \in \N: 1-\gamma^K \geq 1 - \varepsilon\} 
    = \min\{K \in \N: K \geq \frac{\log \varepsilon}{\log \gamma}\} = \big \lceil \frac{\log \varepsilon}{\log \gamma} \big \rceil.$
The termination time follows a geometric distribution with parameter $(1-\gamma)$, and thus has expected value $\frac{1}{1-\gamma}$. 
Setting $T_2(\gamma,\varepsilon) = T_1(\gamma)$ and solving for $\varepsilon$ (ignoring the integer constraint) yields $\varepsilon = \gamma^{\frac{1}{1-\gamma}}$. Finally, taking $\lim_{\gamma \rightarrow 1} \gamma^{\frac{1}{1-\gamma}}$ yields $\frac{1}{e}$.


\newversion{


\subsection{Proof of Proposition \ref{thm:11-14-22-1}}

Let $x \sim \mu^\pi_\gamma$, $x' \sim f^\pi(x)$, $\bar{c} = \E_{x \sim \mu^\pi_\gamma}[C(x)]$, and $\Sigma_C^2 = \E_{x \sim \mu^\pi_\gamma}[(\bar{c} - C(x))(\bar{c} - C(x))^T]$. Suppose $\lambda$ is randomly distributed with mean $\bar{\lambda}$ and variance $\Sigma_\lambda^2$. For any value function approximator $\hat{V}^i_\pi$, assume $\lambda$ and $\hat{V}^i_\pi$ are independent. Let $\eta = \m{1 & \lambda^T}^T$, $\ d = \m{R^i(x) & C(x)^T}^T$,
$\hat{V}_\pi^i: \X \rightarrow \R,\ \hat{V}_{R,\pi}^i: \X \rightarrow \R,$ and $\hat{V}_{C,\pi}: \X \rightarrow \R^m$. Let $\D$ be a dataset of trajectories sampled from $\M$ that is used to train $\hat{V}_\pi^i,\ \hat{V}_{R,\pi}^i$, and $\hat{V}_{C,\pi}$. The mean square temporal difference error achieved by using a generic value function is \begin{align}
    MSTDE_1 = \E_{x,x',\lambda,\D}[(\eta^Td + \gamma \hat{V}_\pi^i(x') - \hat{V}_\pi^i(x))^2] \label{eqn:11-14-22-1}
\end{align} while the error achieved using the structured value function is \begin{align}
    MSTDE_2 = \E_{x,x',\D}[(\eta^Td + \gamma[\hat{V}_{R,\pi}^i(x') - \lambda^T \hat{V}_{C,\pi}(x')] - [\hat{V}_{R,\pi}^i(x) - \lambda^T \hat{V}_{C,\pi}(x)])^2]. \label{eqn:11-14-22-2}
\end{align} Note that in \eqref{eqn:11-14-22-2} we do not take the expectation over $\lambda$ since the dual variables are made available to this function approximator.

Begin with the states and dual variables fixed at $(\bar{x},\bar{x}',\bar{\lambda})$. Let $\hat{g}(\bar{x},\bar{x}') = \m{\hat{V}_{R,\pi}^i(\bar{x}) & \hat{V}_{C,\pi}(\bar{x})^T}^T - \gamma \m{\hat{V}_{R,\pi}^i(\bar{x}') & \hat{V}_{C,\pi}(\bar{x}')^T}^T$ and $\hat{h}(\bar{x},\bar{x}') = \hat{V}_\pi^i(\bar{x}) - \gamma \hat{V}_\pi^i(\bar{x}')$. Then, suppressing the arguments $(\bar{x},\bar{x}')$ and setting $\bar{\eta} = \m{1 & -\bar{\lambda}^T}^T$, we can write the squared temporal difference error at $(\bar{x},\bar{x}',\bar{\lambda})$ as \begin{align}
    STDE_1(\bar{\eta}) = \E_\D[(\bar{\eta}^Td - \hat{h})^2],\\
    STDE_2(\bar{\eta}) = \E_\D[(\bar{\eta}^Td - \bar{\eta}^T \hat{g})^2]. \label{eqn:11-14-22-5}
\end{align}
The loss function used to train $\hat{V}_{R,\pi}^i$ and $\hat{V}_{C,\pi}$ is \begin{align}
    \E_{\D}[\|d - \hat{g}\|^2]. \label{eqn:11-14-22-3}
\end{align} Since $d$ is a deterministic function of $x$, \eqref{eqn:11-14-22-3} can be decomposed into bias and variance terms: \begin{align}
    \E_{\D}[\|d - \hat{g}\|^2] &= \E_\D[\sum_{k=0}^m (d_k - \hat{g}_k)^2] \\
    &= \sum_{k=0}^m \E_\D[(d_k - \hat{g}_k)^2] \\
    &= \sum_{k = 0}^m [(d_k - \E_\D \hat{g}_k)^2 + \E_\D[(\hat{g} - \E_\D \hat{g})^2]] \\
    &:= \sum_{k=0}^m[b_k^2 + \sigma_k^2] \\
    &:= \textup{Tr}[bb^T + \Sigma^2] \label{eqn:11-14-22-4}
\end{align} where $k = 0$ corresponds to the reward signal and $k = 1,\ldots,m$ corresponds to the cost signals.

Following a similar line of reasoning, we can use \eqref{eqn:11-14-22-4} to rewrite \eqref{eqn:11-14-22-5} as \begin{align}
    STDE_2(\bar{\eta}) = \textup{Tr}[(bb^T + \Sigma^2)(\bar{\eta} \bar{\eta}^T)].
\end{align}

For the sake of argument, we assume that $\hat{g}$ and $\hat{h}$ achieve the same performance at $(x,x',\lambda)$, that is, \begin{align}
    STDE_1(\bar{\eta}) = STDE_2(\bar{\eta}) = \textup{Tr}[(bb^T + \Sigma^2)(\bar{\eta}\bar{\eta}^T)]
\end{align} where $\textup{Tr}[(bb^T)(\bar{\eta}\bar{\eta}^T)]$ and $\textup{Tr}[\Sigma^2 \bar{\eta}\bar{\eta}^T]$ reflect the bias squared and variance terms, respectively.
How do $STDE_1$ and $STDE_2$ change when $\lambda$ is allowed to vary? Using the generic estimator, the noise in $\lambda$ will introduce some amount of \textit{irreducible error} into $STDE_1$. On the other hand, using $\lambda = \bar{\lambda} + \Delta \lambda$ in our proposed estimator will change the bias and variance terms in $STDE_2$ while the irreducible error remains at zero (since there is no uncertainty when $\Delta \lambda$ is known). Setting $\Delta \eta = \m{0 & -\Delta \lambda^T}^T,$ the temporal difference errors at $(\bar{x},\bar{x}',\bar{\lambda} + \Delta \lambda)$ are \begin{align}
    STDE_1(\bar{\eta} + \Delta \eta) = \textup{Tr}[(bb^T + \Sigma^2)(\bar{\eta} \bar{\eta}^T)] + (\Delta \eta^Td)^2, \\
    STDE_2(\bar{\eta} + \Delta \eta) = \textup{Tr}[(bb^T + \Sigma^2)((\bar{\eta}+\Delta \eta) (\bar{\eta} + \Delta \eta)^T)].
\end{align}

Taking the expectation over $\Delta \lambda$ which has a mean of zero and a variance of $\Sigma_\lambda^2$, and setting $\Sigma_\eta^2 = \m{0 & 0 \\ 0 & \Sigma_\lambda^2}$, yields \begin{align}
    \E_{\Delta \lambda}[STDE_1(\bar{\eta} + \Delta \eta) - STDE_2(\bar{\eta} + \Delta \eta)] &= \textup{Tr}[\Sigma_\eta^2(dd^T - bb^T - \Sigma^2)] \\
    &= \textup{Tr}[\Sigma_\lambda^2(cc^T - \tilde{b}\tilde{b}^T - \tilde{\Sigma}^2)] \label{eqn:11-14-22-6}
\end{align} where $\tilde{b} = (c - \E_\D \hat{g}_C)$, $\tilde{\Sigma}^2 = \E_\D[(\hat{g}_C - \E_\D\hat{g}_C)^2],$ and $\hat{g}_C = \hat{V}_{C,\pi}(x) - \gamma \hat{V}_{C,\pi}(x')$.
Note that $\E_\D[\|c - \hat{g}_C\|^2] = \textup{Tr}[\tilde{b}\tilde{b}^T + \tilde{\Sigma}^2].$ Taking $\tilde{b},\tilde{\Sigma}^2 \rightarrow 0$ as the accuracy of $\hat{g}_C$ improves, \eqref{eqn:11-14-22-6} can be estimated as \begin{align}
    \textup{Tr}[\Sigma_\lambda^2 cc^T]. \label{eqn:11-29-22-1}
\end{align} Taking the expectation over $c \sim C(x), x \sim \mu^\pi_\gamma$ yields the final result.

}{}

\section{Simulation details} \label{appendix:simulations}
\subsection{Algorithm}

The Constrained Multiagent Advantage Actor Critic (C-MAA2C) algorithm is shown in Algorithm \ref{alg:MAA2C}. The main differences from the basic MAA2C algorithm are the penalty modifications in lines \ref{line:chance} and \ref{line:cvar}, the use of vector-valued value functions $\hat{V}^i : \X \rightarrow \R^{m+1}$ (one per agent in the noncooperative setting), and the dual update.

There are two apparent differences between Algorithm \ref{alg:MAA2C} and the concepts described in the main text. The first is that Algorithm \ref{alg:MAA2C} uses n-step returns in the advantage function, whereas Section \ref{sec:value} only considers one-step returns. We resolve this discrepancy by revisiting the proof of Proposition \ref{thm:11-14-22-1}. First, note that the coefficients $\eta$ can be factored out of the returns just like they are factored out of the rewards. Thus, the proof only requires slight modifications up to the last line, Equation \eqref{eqn:11-29-22-1}. Using returns instead of rewards in \eqref{eqn:11-29-22-1} will lead to a different numerical result but the conclusion (justification for using a structured value function) will be the same. 

The second apparent difference is the fact that Algorithm \ref{alg:MAA2C} considers finite-horizon episodic tasks, thus the primal-dual algorithm will enforce $\E_{\tau \sim \M}[\G_{t=0}^T c_t] \leq 0$. Due to the finite horizon, we cannot directly use the occupation measure to interpret the meaning of this constraint. However, we can define the occupation measure over a finite horizon as \begin{align}
    \mu^\pi_{\gamma,T}(x) = \frac{1}{1-\gamma^{T+1}}\G_{t=0}^{T} p_t^\pi(x).
\end{align} It is easy to show that $\mu^\pi_{\gamma,T}$ is nonnegative and integrates to unity over $\X$. We can use $\mu^\pi_{\gamma,T}$ in place of $\mu^\pi_\gamma$ everywhere in order to interpret discounted sum constraints and to generate probabilistic constraints in finite-horizon episodic tasks. The statements $\E_{\tau \sim \M}[\G_{t=0}^T c_t] \leq 0,\ \E_{\tau \sim \M}[(1-\gamma^{T+1})^{-1}\G_{t=0}^T c_t] \leq 0,$ and $\E_{x \sim \mu^\pi_{\gamma,T}}[C(x)] \leq 0$ are equivalent. Note that the effective horizon discussed in Section \ref{sec:occupation} may be shorter than the horizon length $T$. 

\begin{algorithm}[ht]
	\caption{C-MAA2C with probabilistic safety guarantees and structured value functions}
	\label{alg:MAA2C}
	\begin{algorithmic}[1]
	    \STATE Input discount factor $\gamma$, learning rates $\zeta_\theta,\ \zeta_\omega,\ \zeta_\lambda$, n-step return horizon $\kappa$, tolerances $\alpha$ and $\delta$, multiplier limit $\lambda_{\text{max}}$, episode length $T$, number of episodes $K$, risk metric $\in \{\text{average, chance, CVaR}\}$ \\
	    \STATE Initialize actor parameters $\{\theta^i\}_{i\in \n}$, critic parameters $\{\omega^i\}_{i \in \n}$, parameterized policies $\pi^i(\cdot \mid \theta^i): \X_i \rightarrow \Delta_{\U_i}$, parameterized value estimates $\hat{V}^i(\cdot \mid \omega^i): \X \rightarrow \R^{m+1},$ dual variables $\lambda \in \R^m$, $\eta := \m{1 & -\lambda^T}^T$ 
	    \FOR{$k = 1,2,\ldots,K$}
	        \STATE Initialize $x_0 \sim p_0$ \COMMENT{Run 1 episode}
	        \FOR{$t = 0,1,\ldots T$}
	            \STATE Sample $u_t^i \sim \pi^i(\cdot \mid x_t^i, \theta^i)$ for $i \in \n$
	            \STATE Receive $\{r_t^i\}_{i \in \n},\ c_t, x_{t+1}$
	            \IF{risk metric = chance}
	                \STATE $c_t \leftarrow I[c_t \geq \alpha] - \delta$ \COMMENT{Proposition \ref{thm:10-18-22-1}} \label{line:chance}
	            \ELSIF{risk metric = CVaR}
	                \STATE $c_t \leftarrow [c_t - \alpha]_+ - \delta$ \COMMENT{Proposition \ref{thm:10-20-22-1}} \label{line:cvar}
	            \ENDIF
	            \STATE Let $d_t^i = \m{r_t^i & c_t^T}^T$ for $i \in \n$
	        \ENDFOR
	        \FOR{$i \in \n$}
	            \FOR{$t = 0,1,\ldots, T$}
    	            \STATE $N = \min\{T,t + \kappa\}$
    	            \STATE $D_t^i = \sum_{n=t}^{N-1} \gamma^{n-t}d_n^i + \gamma^{N-t}\hat{V}^i(x_{N}\mid \omega^i)$ \COMMENT{Compute n-step returns}
    	            \STATE $A_t^i = \eta^T(D_t^i - \hat{V}^i(x_t\mid\omega^i))$ \COMMENT{Compute advantages}
	            \ENDFOR
	            \STATE $\theta^i \leftarrow \theta^i + \zeta_\theta \sum_{t=0}^T A_t^i \nabla_{\theta^i} \log \pi^i(u_t^i\mid x_t^i,\theta^i)$ \COMMENT{Actor update}
	            \STATE $\omega^i \leftarrow \omega^i - \zeta_\omega \nabla_{\omega^i}\sum_{t=0}^T\|D_t^i - \hat{V}^i(x_t \mid \omega^i)\|_2^2$ \COMMENT{Critic update}
	        \ENDFOR
	        \STATE $\lambda \leftarrow \lambda + \zeta_\lambda \G_{t=0}^T c_t$ \COMMENT{Dual update}
	        \STATE $\lambda \leftarrow \min\{[\lambda]_+,\lambda_{\text{max}}\}$
	        \STATE $\eta \leftarrow \m{1 & -\lambda^T}^T$ 
	    \ENDFOR
	\end{algorithmic}
\end{algorithm}

\newpage

\subsection{Hyperparameters}

Simulation hyperparameters are listed in Table \ref{table:hparams}. 

\begin{table*}[ht]
	\vskip 0.15in
	\begin{center}
		\begin{tabular*}{\textwidth}{lc}
		    \hline
			\hline
			\textbf{Simulation} \\
			\quad Episode length \hspace{19em} & $25$\\
			\quad Number of episodes & $\{4,8\} \times 10^4$ \\
			\quad Number of trials per configuration & 5 \\
			\hline
			\textbf{RL algorithm} \\
			\quad Discount factor $\gamma$ & 0.99\\
			\quad Actor learning rate $\zeta_\theta$ & $3 \times 10^{-4}$ \\
			\quad Critic learning rate $\zeta_\omega$ & $3 \times 10^{-4}$ \\
			\quad Dual update step size $\zeta_\lambda$ & $1 \times 10^{-4}$ \\
			\quad Optimizer & Adam$(\beta_{\text{Adam}} = (0.9, 0.999))$\\
			\quad n-step return horizon $\kappa$ & 5\\
			\hline
			\textbf{Constraint enforcement}\\
			\quad $\lambda_{\text{max}}$ & 10\\
			\quad Risk level $\beta$ & 0.9\\
			\quad ``LHS tolerance'' $\alpha$:\\
			\quad \quad Average constraints & N/A \\
			\quad \quad Chance constraints & 0.1 \\
			\quad \quad CVaR constraints & 0.2 \\
		    \quad ``RHS tolerance'' $\delta$:\\
		    \quad \quad Average constraints & 0\\
		    \quad \quad Chance constraints & 0.1 \\
		    \quad \quad CVaR constraints & $5 \times 10^{-3}$\\
			\hline
			\textbf{Actors} \\
			\quad Policy architecture & Multi-layer perceptron \\
			\quad Number of hidden layers & 2 \\
			\quad Hidden layer width & 64 \\
			\quad Hidden layer activation & ReLU\\
			\quad Output layer activation & Linear \\
			\quad Action selection & Categorical sampling\\
			\quad Parameter sharing & No \\
			\hline
			\textbf{Critics} \\
			\quad Critic architecture & Multi-layer perceptron \\
			\quad Number of hidden layers & 2 \\
			\quad Hidden layer width & 64 \\
			\quad Hidden layer activation & ReLU\\
			\quad Output layer activation & Linear \\
			\quad Target network update interval & 200 episodes \\
			\quad Parameter sharing & No \\
		    \hline
		    \hline
		\end{tabular*}
	\end{center}
	\vskip -0.1in
	\caption{Simulation hyperparameters.}
	\label{table:hparams}
\end{table*}

\newpage

\subsection{Additional simulation results}

Here, we provide some additional results to supplement the findings in Section \ref{sec:simulations}. First, we compared the convergence to a safe policy under the original discounted sum constraint and found that similar to the results for the other types of constraints, the structured critic demonstrates a better safety margin throughout training. This is illustrated in Figure \ref{fig:sim_avg}. 

\begin{figure}[h]
    \centering
    \includegraphics[width=0.4\textwidth]{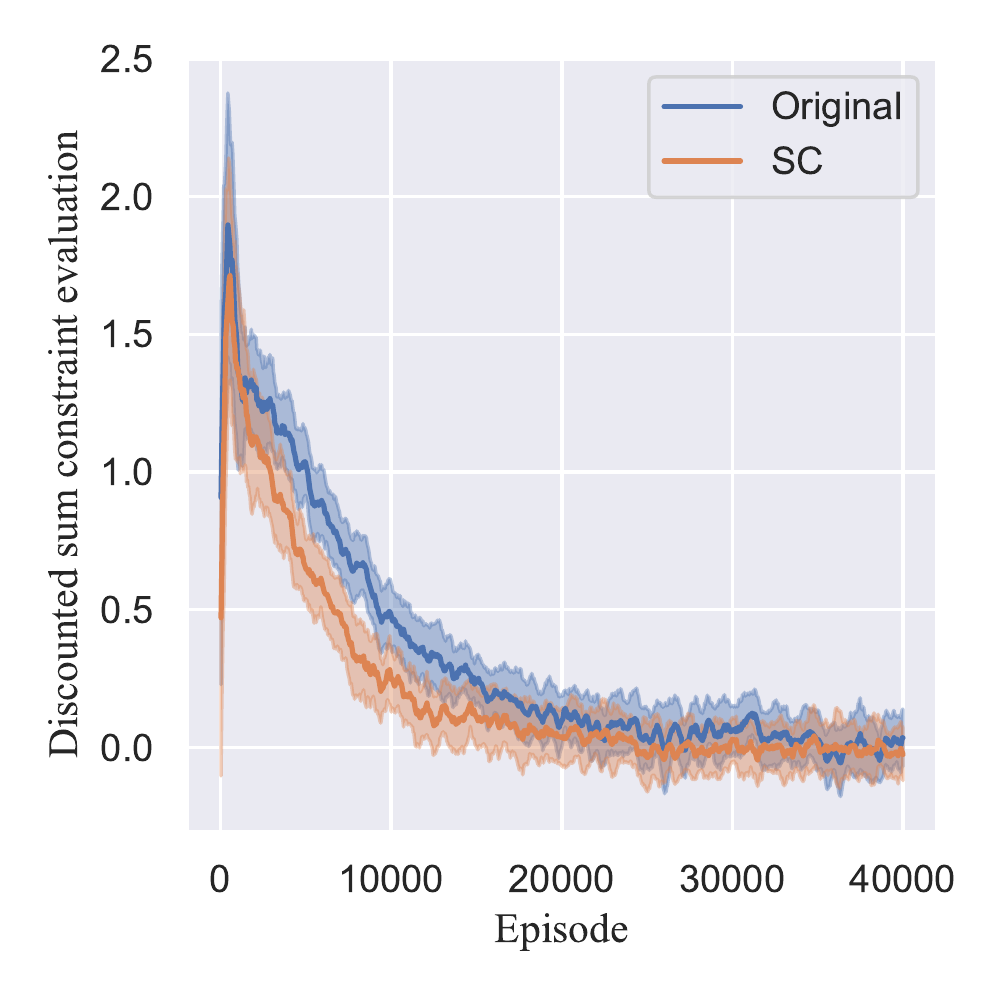}
    \caption{Evaluation of the discounted sum constraint throughout training, showing that the structured critic helps the actor to find safer policies faster. Each line and shaded region represents the mean and standard deviation over 5 training runs. Key: SC = structured critic.}
    \label{fig:sim_avg}
\end{figure}

Next, we provide a closer look at the accuracy of the CVaR upper bound provided in Proposition \ref{thm:10-20-22-1}, and illustrated using dashed lines in the left panel of Figure \ref{fig:sim_cvar}. Table \ref{table:cvar} shows that in all four configurations in which the CVaR was evaluated, the upper bound is a fairly accurate estimate. The results from Section \ref{sec:simulations} show that this upper bound can be used to drive the actual CVaR below a target value. Although using a structured critic with modified penalty function yielded the most accurate CVaR UB, the accuracy in all four configurations could be improved by making further adjustments to the tolerance $\alpha$. The error is reported for policies tested at the end of the training phase. 

\begin{table*}[ht]
	\vskip 0.15in
	\begin{center}
		\begin{tabular}{ccc}
		    \hline
			\hline
			Penalty function & Critic & CVaR UB error\\
			\hline
			$C(x)$ & Generic & 18.3\%\\
			$C(x)$ & Structured & 11.8\%\\
			$[C(x)-\alpha]_+ - \delta$ & Generic & 7.6\%\\
			$[C(x)-\alpha]_+ - \delta$ & Structured & 3.7\%\\
		    \hline
		    \hline
		\end{tabular}
	\end{center}
	\vskip -0.1in
	\caption{Accuracy of CVaR upper bound.}
	\label{table:cvar}
\end{table*}}{}

\end{document}